\documentclass[journal]{IEEEtran}
\ifCLASSINFOpdf
\else
\fi
\usepackage{fixltx2e}

\usepackage{dblfloatfix}

\ifCLASSOPTIONcaptionsoff
  \usepackage[nomarkers]{endfloat}
 \let\MYoriglatexcaption\caption
 \renewcommand{\caption}[2][\relax]{\MYoriglatexcaption[#2]{#2}}
\fi

\usepackage{multirow}
\hyphenation{op-tical net-works semi-conduc-tor}
\usepackage[latin1]{inputenc} 
\usepackage{tikz}
\usetikzlibrary{shapes,arrows} 
\usepackage[cmex10]{amsmath}
\usepackage{amssymb}
\usepackage{amsmath}
\usepackage{enumerate}

%


\hyphenation{op-tical net-works semi-conduc-tor}

\usepackage[cmex10]{amsmath}
\usepackage{amssymb}
%



\newcommand{\qed}{\nobreak \ifvmode \relax \else
      \ifdim\lastskip<1.5em \hskip-\lastskip
      \hskip1.5em plus0em minus0.5em \fi \nobreak
      \vrule height0.75em width0.5em depth0.25em\fi}

\usepackage[american,cute inductors,smartlabels]{circuitikz}
\usetikzlibrary{arrows,automata}									
\usetikzlibrary{arrows,shapes,backgrounds,calc,positioning,patterns}
\usetikzlibrary{calc,arrows,shapes,backgrounds,calc,positioning,patterns,decorations.pathmorphing,decorations.markings,mindmap,trees}
\tikzstyle{block} = [draw, rectangle, minimum height=2em, minimum
width=4em] \tikzstyle{sum} = [draw, fill=blue!20, circle, node
distance=1cm] \tikzstyle{input} = [coordinate] \tikzstyle{output} =
[coordinate] \tikzstyle{pinstyle} = [pin edge={to-,thin,black}]
\usetikzlibrary{calc}
\ctikzset{bipoles/thickness=1} \ctikzset{bipoles/length=0.8cm}
\ctikzset{bipoles/diode/height=.375}
\ctikzset{bipoles/diode/width=.3}
\ctikzset{tripoles/thyristor/height=.8}
\ctikzset{tripoles/thyristor/width=1}
\ctikzset{bipoles/vsourceam/height/.initial=.7}
\ctikzset{bipoles/vsourceam/width/.initial=.7} \tikzstyle{every
	node}=[font=\small] \tikzstyle{every path}=[line width=0.8pt,line
cap=round,line join=round]
\usepackage{amsfonts}
\newtheorem{theorem}{Theorem}                						
\newtheorem{lemma}{Lemma}
                   						
\newtheorem{property}{Property}                   						
\newtheorem{problem}{Problem}                   						
\newtheorem{remark}{Remark}
\newtheorem{proposition}{Proposition} 
\usepackage{proof}
\usepackage{algpseudocode}
\usepackage{upgreek}										
\newcommand{\numberset}{\mathbb}							

\newcommand{\R}{\numberset{R}}

\usepackage{mathtools}																	
\usepackage{booktabs}										
\usepackage{graphics} 										
\usepackage{epsfig} 										
\usepackage{contour}										
\usepackage{bm}
\usepackage{times} 		
\usepackage{rotating}										
\usepackage{longtable}					
\usepackage{pdflscape}	
\usepackage{cite} 									
\usepackage{xcolor} 
\usepackage{dsfont}		

\usepackage{centernot}
\usepackage{color}
\usepackage[acronym]{glossaries}
\usepackage[ruled]{algorithm2e}
\usepackage{graphicx}
\usepackage{tabu}
\usepackage[official]{eurosym}

\usepackage{placeins}
%


%
								
\usepackage{siunitx}										
\renewcommand{\vec}{\bm}

\definecolor{darkgreen}{rgb}{0.0, 0.5, 0.0}

\DeclareMathAlphabet{\mathpzc}{OT1}{pzc}{m}{it}
\DeclareMathAlphabet{\mathcal}{OMS}{cmsy}{m}{n}

\newtheorem{assumption}{Assumption}

\newtheorem{objective}{Objective}

\usepackage{cases}
\usepackage{tikz}				
\usetikzlibrary{arrows,shapes,backgrounds,calc,positioning,patterns}

\DeclareMathOperator{\diag}{diag}

\DeclareMathOperator{\blockdiag}{blockdiag}
\DeclareMathOperator{\col}{col}

\usepackage{array}

\usepackage{fixltx2e}

\usepackage{dblfloatfix}

\usepackage{url}

\hyphenation{op-tical net-works semi-conduc-tor}

\usepackage{balance}

\usepackage{tikz,pgfplots}
\usepackage{epsfig}

\begin{document}
%
\title{Output Regulation for Load Frequency Control}
%
%
%
%
%
\author{Amirreza~Silani, Michele~Cucuzzella, Jacquelien~M.~A.~Scherpen, Mohammad~Javad~Yazdanpanah
    \thanks{A.~Silani, M.~Cucuzzella, and J.~M.~A.~Scherpen  are with Jan C. Wilems Center for Systems and Control, ENTEG, Faculty of
    	Science and Engineering, University of Groningen, Nijenborgh 4, 9747
    	AG Groningen, The Netherlands
    	{\tt\small \{a.silani, m.cucuzzella, j.m.a.scherpen\}@rug.nl}}%
    \thanks{A.~Silani and M.~J.~Yazdanpanah are with Control \& Intelligent Processing Center of Excellence, School of
    	Electrical and Computer Engineering, University of Tehran, Tehran,
    	Iran 
    	{\tt\small \{a.silani, yazdan\}@ut.ac.ir}}%
    \thanks{This work is supported by the EU Project MatchIT (project number: 82203)}

 } 

\maketitle

\begin{abstract}
Motivated by the inadequacy of the existing control strategies for power systems affected by \emph{time-varying} uncontrolled power injections such as loads and the  increasingly widespread renewable energy sources, 
this paper proposes two control schemes based on the well-known output regulation control methodology. The first one is designed based on the classical output regulation theory and addresses the so-called Load Frequency Control (LFC) problem in presence of \emph{time-varying} uncontrolled power injections. Then, in order to also minimize the generation costs, we use an approximate output regulation method that solves numerically only the partial differential equation of the regulator equation and propose a controller based on this solution, minimizing an appropriate penalty function. 
An extensive case study shows excellent performance of the proposed control schemes in different and critical scenarios.
\end{abstract}

\begin{IEEEkeywords}
Power networks, load frequency control, economic dispatch, output regulation.
\end{IEEEkeywords}
%
\IEEEpeerreviewmaketitle

\section{Introduction}
In power networks, the supply-demand mismatch induces frequency deviations from the nominal value, eventually leading to fatal stability disruptions~{\cite{ref1,ref2}}. 
Therefore, reducing this deviation is of vital importance for the overall network resilience and reliability, attracting a considerable amount of research activities on the design and analysis of the so-called {Load Frequency Control (LFC), also known as Automatic Generation Control (AGC), where a suitable control scheme continuously changes the generation setpoints to compensate supply-demand mismatches}, regulating the frequency to the corresponding nominal value (see for instance~{\cite{ref1,ref2}} and the references therein). Moreover, besides ensuring the stability of the overall power infrastructure, in order to solve the so-called economic dispatch problem {\cite{ref47}}, modern control schemes aim also at  reducing the operational costs associated to the LFC. In the literature (see for instance~{\cite{ref47,ref20,ref21,ref49}} and the references therein), this control objective is referred to as \textit{Optimal} LFC~(OLFC). However, as a consequence of the increasingly share of renewable energy sources, we are not sure if the existing control systems are still adequate {\cite{ref3}}. 

\subsection{Literature Review}
Traditionally, a power network is subdivided in the so-called \emph{control areas}, each of which represents an electric power system or combination of electric power systems to which a common LFC scheme is applied {\cite{Ibraheem,Pandey}}.
The LFC problem is usually addressed at each control area by primary and secondary control schemes. More precisely, the primary control layer preserves the stability of the power system acting faster than the secondary control layer, which typically provides the generation setpoints to each control area {\cite{ref47}}. 
Then, in order to obtain OLFC, a tertiary control layer can be used to reduce the generation costs in slow timescales. To tackle the same problem in fast timescales, distributed control schemes are usually adopted, where the control areas cooperate with each other {\cite{ref41}}. 
For the latter case, there exist generally two types of control approaches: consensus-based protocols or primal-dual algorithms. By using the first approach, all the control areas that exchange information through a communication network achieve the same marginal cost, solving the OLFC problem typically in absence of constraints   \cite{ref24,ref25,ref26,ref27,ref29,ref30,ref33,ref34,ref35,ref16,ref17}. The second approach performs OLFC by solving an optimization  problem  that may potentially include constraints for instance on the generated or exchanged power {\cite{ref20,ref21,ref11,ref12,ref13,ref14,ref15,ref18,ref19,ref22,ref23,ref28,ref31}}. 
In the following, we briefly discuss some of the relevant works in the literature on the design and analysis of control schemes achieving LFC and OLFC.


In \cite{robust,ref39,ref40,ref42}, different control schemes for solving the LFC problem in presence of constant loads are proposed. More precisely, in \cite{ref39} and \cite{ref40}, distributed PI droop controllers are designed, while stability conditions for droop controllers are investigated in \cite{ref42}, where the well-known port-Hamiltonian framework is used. Based on the sliding mode control methodology, a decentralised control scheme is proposed in \cite{robust}, where besides frequency regulation, the power flows among different areas are maintained at their scheduled values.
In \cite{ref41,ref11,ref44,ref46,ref26,ref47,ref48,ref49,ref50}, different control schemes for solving the OLFC problem in presence of constant loads are proposed. More precisely, a distributed passivity-based control scheme is proposed in \cite{ref47}, where the voltages are assumed to be constant.
A distributed sliding mode control strategy is proposed in \cite{ref49}, where,  although the robustness property of sliding mode is able to face \emph{time-varying} loads, the stability of the desired equilibrium point is established under the assumption of constant loads only.
A hierarchical control scheme is proposed in \cite{ref41}, while decentralized  integral control and distributed averaging-based integral control schemes are proposed in \cite{ref26}. 
In \cite{ref11}, the convergence is proved under the assumptions of convex cost functions and known power flows, while a gradient-based approach is proposed in \cite{ref44}. 
A linearized power flow model is adopted in \cite{ref46}, while a primal-dual approach is proposed in \cite{ref48}, where an aggregator collects the frequency measurements from all the control areas in order to compute and broadcast the generation setpoints to each control area. A real-time bidding mechanism is developed in~\cite{ref50}. 
{In \cite{antonella1}, a distributed sliding mode observer-based scheme is proposed  to estimate the frequency deviation and perform robust fault reconstruction. In \cite{antonella2}, higher order sliding mode observers are presented to robustly and dynamically estimate the unmeasured state variables in power networks.}

Before presenting in the next subsection the motivations and contributions of this paper, we notice that in all the above mentioned works on LFC and OLFC, theoretical guarantees are established under the common assumption that loads are constant.

\subsection{Motivation and Contributions}

Nowadays, renewable energy sources and \emph{new} loads such as plug-in electric vehicles are an integral part of the power infrastructure. As a consequence, unavoidable uncertainties are sharply increasing and may put a strain on the system stability. For this reason, the resilience and reliability of the power grid may benefit from the design and analysis of control strategies that theoretically guarantee the system stability in presence of \emph{time-varying} loads and renewable sources. 
To do this,  an internal model approach is proposed in \cite{ref16}, where the loads behaviour is described as the output of a dynamical exosystem, as it is customary in output regulation theory~{\cite{ref49.2,Isidori}}.
However, in \cite{ref16} the turbine governor dynamics are neglected, while it is generally important in terms of tracking performance to describe the generation side in a satisfactory level of detail. Moreover, the exosystem model adopted in \cite{ref16} to describe the load dynamics is linear, assumed to be incrementally passive and generally does not allow to achieve OLFC.
Differently from~\cite{ref16}, we first design a controller based on the classical output regulation theory~{\cite{ref49.2}} to solve the LFC problem for a class of exosystems wider than the one adopted in \cite{ref16}. More precisely, in this paper we describe the behaviour of the uncontrolled power injections, \textit{i.e.}, the difference between the power generated by the renewable energy sources and the one absorbed by the loads, by nonlinear exosystems that do not need to be incrementally passive and are independent from the system parameters. 
Secondly, in order to also minimize the generation costs, we use an approximate output regulation method that solves numerically only the partial differential equation of the regulator equation and propose a controller based on this solution, achieving an \emph{approximate} OLFC, where the norm of the error (\textit{i.e.}, the frequency deviation from the nominal value and the difference between the actual generated power and its corresponding optimal value) is upperbounded by a sufficiently small positive constant, whose influence on the performance of the controlled system is shown (by an extensive simulation analysis)  to be negligible in practical applications.
{Moreover, we provide an elegant control design procedure that provides conditions for the solvability of the regulator equations, reduces the computational burden for solving the regulator equations and guarantees stability also in presence of nonlinear time-varying loads and renewable generation.}

The main contributions of this paper can be summarized as follows: 
\begin{itemize}
	\item [(i)]
the LFC problem for nonlinear power networks including \emph{time-varying} uncontrolled power injections is formulated as a standard output regulation problem; 
\item [(ii)]
the {\emph{time-varying}} uncontrolled power injections are represented by the outputs of nonlinear dynamical exosystems;
\item [(iii)] 
we propose a control scheme based on the classical output regulation theory for solving the conventional LFC problem in presence of \emph{time-varying} uncontrolled power injections while ensuring the stability of the overall network;
\item [(iv)]
we use an approximate output regulation method for solving an \emph{approximate} OLFC problem in presence of \emph{time-varying} uncontrolled power injections while ensuring the stability of the overall network.
\end{itemize}
\subsection{Outline}
 This paper is organized as follows. The control problem is formulated in Section~II. The controllers based on the classical output regulation and approximate output regulation are designed and analyzed in Sections~III and~IV, respectively. The simulation results are presented and discussed in Section~V, while in Section~VI conclusions are gathered. 
 \subsection{Notation}
The set of real numbers is denoted by $\R$. The set of positive (nonnegative) real numbers is denoted by $\R_{>0}$ ($\R_{\geq0}$).
Let $\boldsymbol{0}$ denote the vector of all zeros and the null matrix of suitable dimension(s), and $\boldsymbol{1}$ denote the vector containing all ones. The $n \times n$ identity matrix is denoted by $\mathds{I}_n$. Let $A \in \R^{n \times n}$ be a matrix. In case $A$ is a positive definite (positive semi-definite) matrix, we write $A > \vec{0}$  ($A \geq \vec{0}$). Let $\det(A)$ denote the determinant of matrix $A$ and $\vert A\vert$ denote the matrix $A$ with all elements positive. The $i$-th element of vector $x$ is denoted by $x_i$. A steady-state solution to system $\dot x = f(x)$, is denoted by $\overline x$,  \textit{i.e.}, $\boldsymbol{0} = f(\overline x)$. Let $x\in\R^n,y\in\R^m$ be vectors and $W\in\R^{n\times n}$ be a positive semi-definite matrix, then we define $\col(x,y):=(x^\top~y^\top)^\top\in\R^{n+m}$ and $\big\Vert x\big\Vert_{W}:=x^\top W x \geq 0$. Consider the functions $g : \R^n\rightarrow\R^{n\times m}, h : \R^n\rightarrow\R^{n},$ then the Lie derivative of $h(x)$ along $g(x)$ is defined as $L_g h(x):=\frac{\partial h(x)}{\partial x}g(x)$ with $\frac{\partial h(x)}{\partial x}=\col\big(\frac{\partial h_1(x)}{\partial x},\dots,\frac{\partial h_n(x)}{\partial x}\big)$ and $\frac{\partial h_i(x)}{\partial x}=\big(\frac{\partial h_i(x)}{\partial x_1}~\dots~\frac{\partial h_i(x)}{\partial x_n}\big)$ for $i=1,\dots,n$. Given a vector $x\in \R^n$, $[x]\in \R^{n\times n}$ indicates the diagonal matrix whose diagonal entries are the components of $x$ and $\sin(x):=\col\big(\sin(x_1),\dots,\sin(x_n)\big)$. A continuous function $\alpha : \R_{>0} \rightarrow \R_{>0}$ is said to be of class $\mathcal{K}$ if it is nondecreasing and $\alpha(0) = 0$. The bold symbols indicate the solutions to a partial differential equation.  
For notational simplicity, the dependency of the variables on time $t$ is mostly omitted throughout the paper.
\section{Problem Formulation}
In this section, we introduce the nonlinear power system model together with the dynamics of the uncontrolled power injections (\textit{i.e.}, the difference between the power generated by the renewable energy sources and the one absorbed by the loads), which are described as the output of a nonlinear dynamical exosystem. Then, two control objectives are presented: Load Frequency Control (LFC)  and   \emph{approximate Optimal} LFC. 
\subsection{Power Network Model}
In this subsection, we discuss the model of the considered power network (see Table~\ref{tab1} for the description of the symbols and parameters used throughout the paper). 
The  network topology is represented by an undirected and connected graph $\mathcal{G}=(\mathcal{V},\mathcal{E})$, where $\mathcal{V}=\{1, 2, ..., n\}$ is the set of the control areas and $\mathcal{E}=\{1, 2, ..., m\}$ is the set of the transmission lines.
Then, let  $\mathcal{A}\in \mathbb{R}^{n\times m}$ denote the corresponding incidence matrix and let the ends of the transmission line $j$ be arbitrarily labeled with a `$+$' and a  `$-$'. Then, we have
\begin{equation}\label{eq1}
\mathcal{A}_{ij}=\begin{cases}
+1 \quad &\text{if $i$ is the positive end of $j$}\\
-1 \quad &\text{if $i$ is the negative end of $j$}\\
~~0 \quad &\text{otherwise}.
\end{cases}
\end{equation}
Moreover, in analogy with \cite{ref16,ref17}, we assume that the power network is lossless and each node represents an aggregated area of generators and loads. Then, the dynamics (known as \emph{swing} dynamics) of node (area) $i\in\mathcal{V}$ are the following (see also \cite{ref16,ref17,ref47,ref49} for further details):
\begin{equation}\label{eq2}
\begin{split}
\dot{\varphi} _i=&~\omega_{i} \\
\tau _{p i} \dot{\omega}_{i}=&-\psi _{i}\omega_{i} +P_{c i}+P_{d i}+\sum _{j\in \mathcal{N}_i} V_i V_j B_{ij}\sin(\varphi _i - \varphi _j)\\
\tau_{v i}\dot{V}_i=&~\bar{E}_{f i}-(1-\chi_{d i}B_{ii})V_i\\
&+\chi_{d i}\sum_{j\in \mathcal{N}_i}V_jB_{ij}\cos(\varphi _i - \varphi _j),
\end{split}
\end{equation}
where $\varphi_i,\omega_i, V_i,   P_{d i} : \R_{\geq 0}\rightarrow \R$, $\tau _{p i}, \tau _{v i}, \psi _{i},\bar{E}_{f i},B_{ii}, $ $B_{ij} \in \R,  \chi _{d i}:=X _{d i}-X' _{d i}$, with $X _{d i}, X' _{d i}\in\R$, and $P_{c i}: \R_{\geq 0}\rightarrow \R$ is the power generated by the $i$-th (equivalent) synchronous generator and can be expressed as the output of a first-order dynamical system describing the behavior of the turbine-governor {\cite{ref47,ref49}}, \textit{i.e.},
\begin{equation}\label{eq4.1}
	\begin{split}
		\tau_{ci}\dot{P}_{ci}=-P_{ci}-\xi_{i}^{-1}\omega_{i} +u_{i},
	\end{split}
\end{equation}
where $u_{i} : \R_{\geq 0}\rightarrow \R$ is the control input and $\tau_{c i},\xi_i\in\R$. 

\begin{table}
	\centering
	\caption{Symbols}
	{\begin{tabular}{ll}
			\toprule			

			$P_{c i}$		& Conventional power generation\\
			$P_{d i}$		& Uncontrolled power injection\\  
			$\varphi _i$	& Voltage angle\\
			$\omega_i$	    & Frequency deviation\\
			$V_i$ 	    	& Voltage \\
			$d_{ai}, d_{bi}$ 	& States of the exosystem describing the uncontrolled power injections \\
			$\tau _{p i}$   & Moment of inertia\\
			$\tau _{v i}$   & Direct axis transient open-circuit constant\\
     		$X_{d i}$       & Direct synchronous reactance\\
			$X'_{d i}$      & Direct synchronous transient reactance\\
			$\psi_i$        & Damping constant\\
			$B$             & Susceptance\\
			$\bar{E}_{f i}$       & Constant exciter voltage\\
     	 	$\mathcal{N}_i$ & Neighboring areas of area $i$\\
     	 	$\tau_{c i}$    & Turbine time constant\\
     	 	$\xi_i$         & Speed regulation coefficient\\
     	 	$\Gamma_i$        & Constant parameter of uncontrolled power injections\\       
     	 	$\mathcal{A}$        & Incidence matrix of power network\\    
     	 	$L^{\mathrm{com}}$   & Laplacian matrix of communication network\\ 
			$u_i$	     	& Control input \\
			\bottomrule
	\end{tabular}}
	\label{tab1}
	\vspace{-1em}
\end{table} 
Now we can write systems (\ref{eq2}) and (\ref{eq4.1}) compactly for all nodes $i \in \mathcal{V}$ as 
\begin{equation}\label{eq3}
\begin{split}
\dot{\theta}&=\mathcal{A}^\top \omega \\
\tau _p \dot{\omega} &= -\psi \omega +
P_c+P_d -\mathcal{A}\Upsilon(V) \sin(\theta)\\
\tau _v \dot{V}&=-\chi _dE(\theta)V+\bar{E}_{f}\\
\tau_{c}\dot{P}_{c}&=-P_{c}-\xi^{-1}\omega +u,
\end{split}
\end{equation}
where $\omega, V, P_c, P_d, u : \R_{\geq 0}\rightarrow \R^{n}$,  $\theta : \R_{\geq 0}\rightarrow \R^{m}$ denotes the vector of the voltage angles differences, \textit{i.e.}, $\theta:=\mathcal{A}^\top\varphi$, $\tau _p, \tau _v, \psi, \chi_d, \tau _c, \xi \in \R^{n\times n}$, and $\bar{E}_f\in\R^n$. Moreover, $\Upsilon : \R^{n}\rightarrow \R^{m\times m}$ is defined as $\Upsilon(V):=\diag\{\Upsilon_1, \Upsilon_2, ..., \Upsilon_m\}$, with $\Upsilon_k:=V_iV_jB_{ij}$, where $k\sim\{i,j\}$ denotes the line connecting areas $i$ and $j$. Furthermore, for any $i,j\in \mathcal{V}$, the components of $E : \R^m\rightarrow \R^{n\times n}$ are defined as follows: 
 \begin{align}
   \begin{split}
E_{ii}(\theta)=&~\frac{1}{\chi_{d i}}-B_{ii}, \hspace{6.8em} i \in \mathcal{V}\\
E_{ij}(\theta)=&-B_{ij}\cos(\theta_k)=E_{ji}(\theta), \hspace{1.2em} k \sim \{i,j\} \in \mathcal{E} \\
E_{ij}(\theta)=&~0,  \hspace{10.35em}\text{otherwise}.
   \end{split}
 \end{align}
\begin{remark}\textbf{(Susceptance and reactance).}
	 According to \cite[Remark~1]{ref49}, we notice that the reactance $X_{di}$ of each generator $i\in\mathcal{V}$ is in practice generally larger than the corresponding transient reactance $X'_{di}$. Furthermore, the self-susceptance $B_{ii}$ is negative and satisfies $\vert B_{ii}\vert>\sum_{j\in \mathcal{N}_i} \vert B_{ij}\vert$. Therefore, $E(\theta)$ is a strictly diagonally dominant and
symmetric matrix with positive elements on its diagonal, implying that $E(\theta)$
is positive definite \cite{ref16}.
\end{remark}

Now, as it is customary in the power systems literature (see for instance \cite{ref16,ref47,ref49}), we assign to the conventional power generation, \textit{i.e.}, the power generated by the synchronous generators, the following strictly convex linear-quadratic cost function to model the costs of the conventional power generations\footnote{{In this work we consider only uncontrollable loads. Thus, we do not include in the cost function \eqref{eq4.61} the cost associated with the load adjustment.}}:  
\begin{equation}\label{eq4.61}
\begin{split}
J(P_c)=P_c^\top QP_c+R^\top P_c+\boldsymbol{1}_n^\top C,
\end{split}
\end{equation}
where $J : \R^n\rightarrow \R$, $Q \in\R^{n\times n}$, and $ R, C \in\R^{n}$.
 Furthermore, to permit the design of the control input $u$ in Sections \ref{sec:COR} and \ref{sec:AOR}, as a first step, we augment system \eqref{eq3} with the following distributed dynamics \cite{ref47,ref49}:
\begin{equation}\label{eq4.6}
\begin{split}
	\tau_\delta \dot{\delta}&=-\delta +P_c-\xi^{-1}QL^{\mathrm{com}}(Q\delta + R),\\
\end{split}
\end{equation}
where $\delta : \R_{\geq 0}\rightarrow \R^{n}$, $\tau_{\delta} \in\R^{n\times n}$, and  $L^{\mathrm{com}}\in\R^{n\times n}$ is the Laplacian matrix associated with a communication network, whose corresponding graph is assumed to be undirected and connected. More precisely, the term $Q\delta + R$ in \eqref{eq4.6} reflects the (virtual) marginal cost associated with the conventional power generation and $L^{\mathrm{com}}(Q\delta + R)$ represents the exchange of such information among the areas of the power network.
 In Section~III, we will show that \eqref{eq4.6} plays an important role in the stability analysis of system \eqref{eq3}. 
More precisely, in Section III the state variable $\delta$ will be used to design a state-feedback controller that stabilizes system \eqref{eq3}. Then, the gain of such controller will be used for the output regulation control methodology.

\subsection{Exosystem Model}
The power associated with the renewable generation and load demand is in practice \emph{time-varying}. The renewable generation depends indeed on several exogenous factors such as the wind speed and solar radiation for wind and photovoltaic energy, respectively.  
Also, the load demand generally depends on several exogenous factors such as the weather conditions, usage patterns and social aspects \cite{ref49.1}. Consequently, the \emph{time-varying} behaviour of loads and renewable generation sources can be described by dynamic systems that are independent from the state of the power network \cite{ref49.1}.
{The class of exosystems we consider in this paper are capable to accurately describe the real behaviour of loads and renewable energy sources and are often adopted in the literature \cite{ref16,Wang,Arif}. In practice, the parameters of the exosystems can be identified from historical data \cite{Choi,Ge,Hill}, {as we will do in Section \ref{sec:sim} (see Scenario~3)}}.
Then, as it is customary in output regulation control theory \cite{ref49.2,Isidori}, we consider  the uncontrolled power injections (\textit{i.e.}, the difference between the power generated by the renewable energy sources and the one absorbed by the loads) as exosystems and the dynamics of the $i$-th uncontrolled power injection can be expressed as follows\footnote{{The adopted exosystem is a general nonlinear system capable of reproducing a large class of signals and may potentially also take into account the presence of storage devices.}} (see for instance \cite{ref16} and the references therein): 
\begin{equation}\label{eq4.2}
\begin{split}
\dot{d}_{a i}&=0 \\
\dot{d}_{b i}&=s_{i}(d_{b i}) \\
P_{d i}&=\Gamma_{i} \begin{pmatrix}
d_{a i} \\
d_{b i}
\end{pmatrix},
\end{split}
\end{equation}
where ${d}_{a i} : \R_{\geq 0}\rightarrow \R$, {${d}_{b i} : \R_{\geq 0} \rightarrow \R^{n_d}$} are the states of the exosystem describing the constant and \emph{time-varying} components of  the uncontrolled power injection $i\in\mathcal{V}$, respectively, {$s_{i}:\R^{n_d}\rightarrow\R^{n_d}$} {is Lipschitz,} and {$\Gamma_{i}\in\R^{1\times (n_d+1)}$}. 
Then, (\ref{eq4.2}) can be written compactly for all nodes $i \in \mathcal{V}$ as
\begin{equation}\label{eq4.3}
\begin{split}
\dot{d}&=S(d) \\
P_d&=\Gamma d, 
\end{split}
\end{equation}
where {$d: \R_{\geq 0} \rightarrow \R^{n(n_d+1)}$} is defined as $d := \col(d_{a1}, d_{b1}, \dots, d_{an}, d_{bn}), P_d : \R_{\geq 0}\rightarrow \R^{n}$, {$S: \R^{n(n_d+1)} \rightarrow \R^{n(n_d+1)}$} is defined as $S := \col(0, s_{1},\dots,0, s_{n})$, and  {$\Gamma := \blockdiag(\Gamma_{1}, \dots,  \Gamma_{n})\in\R^{n\times n(n_d+1)}$}. {Although we treat the constant component as an uncertain term (see the first line in \eqref{eq4.2}), the estimated time-varying components may differ in practice from the actual ones. However, the theoretical analysis in presence of uncertain exosystems is out of the scope of this work and we leave as future research the possibility of designing a controller based on the robust output regulation methodology presented in \cite[Chapters~6, 7]{ref49.2}. Although we do not tackle this aspect theoretically, we will show in simulation (see Section \ref{sec:sim}, Scenario 3) that the controlled system is input-to-state stable with respect to the possible mismatch between the actual power injection and the one generated by the corresponding exosystem.}


\begin{remark}\textbf{(Nonlinear exosystem).}\label{remark exo}
Note that the load demand considered in \cite[Corollary~2]{ref16} is given by the  superposition of three different terms: (i) a constant, (ii) a periodic component that can be compensated optimally and (iii) a periodic component that cannot be compensated optimally. The corresponding exosystem is linear, assumed to be incrementally passive and depends on some predefined constant matrices.
Differently, the considered  exosystem \eqref{eq4.3} can be generally nonlinear and does not depend on any predefined parameter. 
\end{remark}
%
\subsection{Control Objectives}
In this subsection, we introduce and discuss the main control objectives of this work.
 The first objective concerns the asymptotic regulation of the  frequency deviation to zero, \textit{i.e.}, 
\begin{objective}\textbf{(Load Frequency Control).}\label{objective1}
	\begin{equation}\label{obj1}
	\lim_{t\rightarrow \infty}\omega(t)=\boldsymbol{0}_n.
\end{equation}
\end{objective}
Besides improving the stability of the power network by regulating the frequency deviation to zero, advanced control strategies additionally aim at reducing the costs associated with the power generated by the conventional synchronous generators.
In this regard, \cite[Lemma 2]{ref49}, \cite[Lemma~3]{ref16} show that it is possible to achieve zero steady-state frequency deviation and simultaneously minimize the total generation cost function~\eqref{eq4.61} when the uncontrolled power injection $P_d$ is constant. More precisely, when the uncontrolled power injection $P_d$ is constant, the \emph{optimal} value of $P_c$, which allows for zero steady-state frequency deviation and minimizes (at the steady-state) the total generation cost \eqref{eq4.61}, is given by:
\begin{equation}\label{optimal}
P_c^\mathrm{opt}= Q^{-1}\Big(\frac{\boldsymbol{1}_n\boldsymbol{1}_n^\top (Q^{-1}R-P_d)}{\boldsymbol{1}_n^\top Q^{-1}\boldsymbol{1}_n}-R\Big),
\end{equation}  
{which is the solution to the following optimization problem}
{\begin{equation}\label{minimization}\nonumber
\begin{split} 
&\min~J(P_c)\\ 
&~\text{s.t.}~\boldsymbol{1}_n^\top (\bar{u}+P_d)=0,
\end{split}
\end{equation}}
{where $J(P_c)$ is given in \eqref{eq4.61} (see \cite{ref49,ref16} for further details).}
This leads us to the second objective concerning the frequency regulation and minimization of the total generation cost, which is also known in the literature as economic dispatch or \emph{Optimal} LFC (OLFC) \cite{ref16,ref49}. More precisely, in \cite[Subsection 6.1]{ref16} it is shown that economically efficient frequency regulation can be achieved in the presence of a particular class of linear \emph{time-varying} load models (see Remark~\ref{remark exo} for more details on the exosystem model adopted in \cite{ref16}).
However, the achievement of OLFC in presence of a wider class of nonlinear \emph{time-varying} uncontrolled power injections appears complex and still challenging. 
Then, in order to address this challenging task in Section \ref{sec:AOR}, we introduce now the concept of \emph{approximate} OLFC (\mbox{$\epsilon-$OLFC}), \textit{i.e.},
\begin{objective}\textbf{(Approximate Optimal Load Frequency Control).}	\label{objective2}
\begin{equation}\label{obj2}
\lim_{t\rightarrow \infty}\Vert\col\big(\omega(t), P_c(t)\big)-\col\big(\boldsymbol{0}_n, P_c^\mathrm{opt}(t)\big)\Vert \leq \epsilon,
\end{equation}	
\end{objective}
with $\epsilon\in\R_{\geq 0}$ and $P_c^\mathrm{opt}(t)$ given by \eqref{optimal}.

We notice that when $\epsilon=0$, Objective 2 becomes identical to the classical  OLFC objective, \textit{i.e.,} $\lim_{t\rightarrow \infty}\col\big(\omega(t), P_c(t)\big)=\col\big(\boldsymbol{0}_n, P_c^\mathrm{opt}(t)\big)$. 

We assume now that there exists a (suitable) steady-state solution to the considered augmented power network model (\ref{eq3}), (\ref{eq4.6}) and (\ref{eq4.3}).  
\begin{assumption}\textbf{(Steady-state solution).}\label{ass1}
	There exist a constant input $\bar{u}$ and a steady-state solution $(\bar{\theta}, \bar{\omega}, \bar{V}, \bar{P}_c, \bar{\delta}, \bar{d})$ to (\ref{eq3}), (\ref{eq4.6}) and (\ref{eq4.3}) satisfying 	
	\begin{equation}\label{eq4}
	\begin{split}
	\boldsymbol{0}&=\mathcal{A}^\top\bar{\omega} \\
	\boldsymbol{0}&= -\psi\bar{\omega} +\bar{P}_c+\Gamma \bar{d}-\mathcal{A}\Upsilon(\bar{V}) \sin(\bar{\theta})\\
	\boldsymbol{0}&=-\chi_d E(\bar{\theta})\bar{V}+\bar{E}_{f}\\
	\boldsymbol{0}&=-\bar{P}_c-\xi^{-1}\bar{\omega}+\bar{u}\\
	\boldsymbol{0}&=-\bar{\delta} +\bar{P}_c-\xi^{-1}QL^{\mathrm{com}}(Q\bar{\delta} + R)\\
	\boldsymbol{0}&=S(\bar{d}). \\
	\end{split}
	\end{equation}
Additionally, \eqref{eq4} holds also when $\bar{\omega}=\boldsymbol{0}$ and $\bar{P}_c = P_c^\mathrm{opt}$, with $P_c^\mathrm{opt}$ given by \eqref{optimal}.
\end{assumption}

In the next section, we formulate the LFC problem as a classical output regulation control problem and design a control scheme for regulating the frequency in presence of \emph{time-varying} loads and renewable generation sources. To this end, we will first consider the state-feedback controller presented
in \cite{ref49}, 
which guarantees the asymptotic stability of system \eqref{eq3} and  \eqref{eq4.6} with \textit{constant} uncontrolled power
injections. 
As a consequence, in analogy with \cite{ref49,ref16}, the following assumption is required:
\begin{assumption}\textbf{(Steady-state voltage angle and amplitude).}\label{ass2}
	The steady-state voltage $\bar{V}\in\R^n$ and angle difference $\bar{\theta}\in\R^m$ satisfy 
	\begin{equation}\label{eq5}
	\begin{split}
	&\bar{\theta}_l \in (-\frac{\pi}{2},\frac{\pi}{2}),~\forall l\in\mathcal{E}\\
    \dfrac{1}{\chi_{d i}}-B_{ii}+\sum_{l\sim{i,j}\in\mathcal{E}}&\dfrac{B_{ij}\big(\bar{V}_i+\bar{V}_j\sin^2(\bar{\theta}_l) \big)}{\bar{V}_i\cos(\bar{\theta}_l)}>0,~   \forall i\in\mathcal{V}.
	\end{split}
	\end{equation}
\end{assumption}
Note that Assumption~\ref{ass2} is usually verified in practice, \textit{i.e.}, the differences in voltage
(angles) are small and the line reactances are greater than the
generator reactances \cite{ref49,ref16}.

\section{Output Regulation for Load Frequency Control}
\label{sec:COR}
In this section, we use the output regulation control methodology introduced in \cite{ref49.2} to achieve Objective~1, while the design of a control algorithm achieving Objective~2 is addressed in Section~\ref{sec:AOR}.

First, let the state variable $x : \R_{\geq 0}\rightarrow \R^{m+4n}$ be defined as  $x:=\col(\theta,\omega,V,P_c,\delta)$, {$d : \R_{\geq 0}\rightarrow \R^{n(n_d+1)}$} be the exosystem state variable, and $u : \R_{\geq 0}\rightarrow \R^{n}$ be the control input. Then, we can rewrite  (\ref{eq3}), \eqref{eq4.6} and (\ref{eq4.3}) as 
\begin{subequations}\label{eq6}
\begin{align}\label{eq6a}
\dot{x} =&~f(x,d)+g(x,d)u \\ \label{eq6b}
\dot{d} =&~S(d) \\ \label{eq6c}
h(x,d) =&~\omega,
\end{align}
\end{subequations}
where $h(x,d)$ is the output mapping, $g(x,d):=\col(\boldsymbol{0}_{m\times n}, \boldsymbol{0}_{n\times n}, \boldsymbol{0}_{n\times n}, \tau_c^{-1}, \boldsymbol{0}_{n\times n})$ and 
\begin{equation}\label{eq6.1}
\begin{split}
&f(x,d):=
\left(\begin{array}{c}
\mathcal{A}^\top \omega \\
\tau_p^{-1}\Big(-\psi \omega +P_c+\Gamma d-\mathcal{A}\Upsilon(V) \sin(\theta)\Big)\\
\tau_{v}^{-1} (-\chi _d E(\theta)V+\bar{E}_{f})\\
\tau_{c}^{-1}(-P_c-\xi^{-1}\omega) \\
\tau_{\delta}^{-1}(-\delta +P_c-\xi^{-1}QL^{\mathrm{com}}(Q\delta + R))
\end{array}\right).
\end{split}
\end{equation}
Now, we compute the relative degree of system \eqref{eq6}. Indeed, in the following subsections, the relative degree of system \eqref{eq6} is useful to design the controller and simplify the regulator equation. More precisely, it is used to compute the zero dynamics of system \eqref{eq6} and, consequently, reduce the order of the regulator equation, simplifying the output regulation problem.  Let us also define
{\begin{equation}\label{f_a}
\begin{split}
f_a(x,d)&:=\col(f(x,d),S(d))\\
g_a(x,d)&:=\col(g(x,d),\boldsymbol{0}_{n(n_d+1)\times n}),
\end{split}
\end{equation}}
then, according to \cite[Definition~2.47]{ref49.2}, the relative degree of the system \eqref{eq6} is computed in the following lemma. 
\begin{lemma}\textbf{(Relative degree of \eqref{eq6}).} \label{lemma 1}
		For each $i=1,\dots,n$, the $i$-th output $h_i$ of system \eqref{eq6} has relative degree $r_i=2$ for all the trajectories $(x,d)$.
\end{lemma}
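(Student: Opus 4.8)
The plan is to apply the MIMO relative-degree definition \cite[Definition~2.47]{ref49.2} directly to the augmented system $\dot{(x,d)}=f_a(x,d)+g_a(x,d)u$, $y=h(x,d)=\omega$, and to verify the two required conditions for each output component $h_i=\omega_i$ by computing only its first two Lie derivatives. Concretely, since we claim $r_i=2$ for every $i$, it suffices to show: (a) $L_{g_{a,j}}h_i(x,d)=0$ for every input channel $j$ and every $(x,d)$; and (b) the row $\big(L_{g_{a,1}}L_{f_a}h_i,\dots,L_{g_{a,n}}L_{f_a}h_i\big)$ is not identically zero. In fact I would aim for the stronger conclusion that the full decoupling matrix $[L_{g_{a,j}}L_{f_a}h_i]_{i,j}$ is nonsingular everywhere, since that is what the normal-form and zero-dynamics construction in the following subsections will need.

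First I would dispose of (a): the output $h_i(x,d)=\omega_i$ depends only on the $\omega$-block of $x$, whereas by \eqref{f_a} and the definition of $g$ given below \eqref{eq6} the vector field $g_a$ is supported entirely on the $P_c$-block (with value $\tau_c^{-1}$ there) and vanishes on the $\theta$-, $\omega$-, $V$-, $\delta$- and $d$-blocks. Hence $L_{g_{a,j}}h_i=\tfrac{\partial\omega_i}{\partial x}\,g_{a,j}=0$ identically in $(x,d)$, for all $i$ and $j$.

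Next I would compute the order-one Lie derivative from \eqref{eq6.1}: $L_{f_a}h_i=\dot\omega_i=\tau_{pi}^{-1}\big(-\psi_i\omega_i+P_{ci}+(\Gamma d)_i-(\mathcal{A}\Upsilon(V)\sin\theta)_i\big)$. The only occurrence of the $P_c$-block here is the linear term $\tau_{pi}^{-1}P_{ci}$, so $\tfrac{\partial(L_{f_a}h_i)}{\partial P_c}$ is $\tau_{pi}^{-1}$ times the $i$-th unit row vector; contracting with $g_a$, whose $P_c$-block is the diagonal matrix $\tau_c^{-1}$ with entries $\tau_{ci}^{-1}$, yields $L_{g_{a,j}}L_{f_a}h_i=\tau_{pi}^{-1}\tau_{ci}^{-1}\,\delta_{ij}$. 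Thus the decoupling matrix equals the constant diagonal matrix $\tau_p^{-1}\tau_c^{-1}$, which is invertible because $\tau_p$ and $\tau_c$ are (diagonal) invertible matrices; in particular $L_{g_{a,i}}L_{f_a}h_i\neq0$. By \cite[Definition~2.47]{ref49.2} this establishes $r_i=2$, and since the coefficient is a state-independent constant the property holds uniformly over all $(x,d)$, hence over all trajectories.

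There is essentially no serious obstacle here; the computation is short and structural, and the result is driven entirely by the block sparsity of $g_a$ together with the linear, diagonal way in which $P_c$ enters the $\dot\omega$ equation. The only point needing care is the bookkeeping of the MIMO definition: one must check that \emph{all} the lower-order mixed Lie derivatives $L_{g_{a,j}}h_i$ vanish identically in $(x,d)$ (not merely at a point), and then confirm that at order $r_i-1=1$ the coefficient matrix is not only nonzero but a constant invertible diagonal matrix — this uniformity and nonsingularity is precisely what will justify the subsequent input-output linearization and the reduction of the regulator equations in Sections~\ref{sec:COR} and \ref{sec:AOR}.
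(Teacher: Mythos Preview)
Your proposal is correct and follows essentially the same approach as the paper: direct computation of $L_{g_a}h=\boldsymbol{0}$, $L_{f_a}h$, and $L_{g_a}L_{f_a}h=\tau_p^{-1}\tau_c^{-1}$ via \cite[Definition~2.47]{ref49.2}. Your write-up is in fact more explicit about the MIMO bookkeeping and the nonsingularity of the decoupling matrix, which the paper's proof leaves implicit.
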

\begin{proof}
{See Subsection~\ref{Appendix A} in the Appendix.}
\end{proof}


\begin{remark}\textbf{(Asymptotic stability of system \eqref{eq6a} with constant power injections).}\label{remark2}
	From \cite[Theorem~1, Remark~3]{ref49},  it follows that, if Assumptions \ref{ass1} and \ref{ass2} hold, the state-feedback controller $u=\delta$ asymptotically stabilizes system \eqref{eq6a} when the uncontrolled power injections are assumed to be constant, achieving Objectives 1 and 2. Note that, this result is needed for the solvability of the output regulation problem we introduce in the next subsection (see \cite[Assumption 3.2]{ref49.2} for further details). 
\end{remark}

In the following, we briefly recall for the readers' convenience some concepts of the output regulation control methodology. Then, we design a control scheme achieving Objective~1 in presence of \emph{time-varying} uncontrolled power injections.
\subsection{Output Regulation Methodology}
In analogy with \cite[Section~3.2]{ref49.2}, we first define the nonlinear output regulation problem for system \eqref{eq6} as follows:
\begin{problem}\textbf{(Output regulation).}\label{problem1}
Let Assumptions \ref{ass1}, \ref{ass2} hold and the initial condition  $\big(x(0),d(0)\big)$ of system \eqref{eq6} be sufficiently close to the equilibrium point $(\bar{x},\bar{d})$ satisfying \eqref{eq4}. Then, design a state feedback controller
\begin{equation}\label{uuu}
u(t)=k\big(x(t),d(t)\big)
\end{equation}
such that the closed-loop system \eqref{eq6}, \eqref{uuu} has the following properties:

\begin{enumerate}[]
	\item \begin{property}\label{p1}
		The trajectories $\col\big(x(t),d(t)\big)$ of the closed-loop system exist and are bounded for all $t\geq0$;
\end{property}
\item \begin{property}\label{p2}
	The trajectories $\col\big(x(t),d(t)\big)$ of the closed-loop system satisfy $\lim _{t\rightarrow\infty} h(x,d)=\boldsymbol{0}_n$, achieving Objective~1.
\end{property}
\end{enumerate}
\end{problem}
We say that the (local) nonlinear output regulation problem (Problem~\ref{problem1}) is \textit{solvable}
if there exists a controller such that the closed-loop system satisfies Properties~\ref{p1} and \ref{p2}. Now, in analogy with \cite[Assumption~3.1$^\prime$]{ref49.2}, we introduce the following assumption on the exosystem \eqref{eq6b}.
\begin{assumption}\textbf{(Stability of the exosystem \eqref{eq6b}).}\label{ass3}
The equilibrium $\bar{d}$ of the exosystem \eqref{eq6b} is  Lyapunov stable and there exist an open neighborhood $\mathcal{D}$ of $d=\bar{d}$, where every point is Poisson stable in the sense described in~\cite[Remark~3.2]{ref49.2}.	
\end{assumption}

{Note that the above assumption is the only assumption on the exosystem \eqref{eq6b}, concerning the stability of its equilibrium point. Indeed, the stability of the exosystem, which implies the boundedness of the disturbances, is a standard assumption in several control methodologies since it is very hard to propose a control scheme guaranteeing the stability of  the closed-loop system with unbounded disturbances.  
}
Also, the above assumption is required for establishing the necessary condition for the solvability of Problem~\ref{problem1}, 
which is established in the following theorem.
\begin{theorem}\textbf{(Solvability and regulator equation}\textbf{).} \label{th1}
Let Assumptions~\ref{ass1}--\ref{ass3} hold, then Problem~\ref{problem1} is solvable if and only if there exist smooth functions  $ \boldsymbol{x}(d)$ and $\boldsymbol{u}(d)$ defined for $d\in\mathcal{D}$ such that
\begin{subequations}\label{eq25}
\begin{align}\label{eq25a}
\dfrac{\partial \boldsymbol{x}(d)}{\partial d}S(d) =&~f(\boldsymbol{x}(d),d)+g(\boldsymbol{x}(d),d)\boldsymbol{u}(d)\\ \label{eq25b}
\boldsymbol{0}_n =&~h(\boldsymbol{x}(d),d).
\end{align}
\end{subequations}
\end{theorem}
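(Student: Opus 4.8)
The statement is the classical necessary-and-sufficient condition for solvability of the nonlinear output regulation problem, adapted to our system \eqref{eq6}. The plan is to verify that the hypotheses of the generic output regulation theorem (e.g.\ \cite[Theorem~3.8]{ref49.2} or its local version) are met, and then to invoke that theorem directly. First I would recall the standing ingredients: Assumption~\ref{ass1} provides a well-defined steady-state pair $(\bar{x},\bar{d})$ with $\boldsymbol{0}=f(\bar x,\bar d)+g(\bar x,\bar d)\bar u$ and $\boldsymbol{0}=S(\bar d)$, and $\boldsymbol{0}_n=h(\bar x,\bar d)$ since $\bar\omega=\boldsymbol{0}$; Assumption~\ref{ass3} supplies the Lyapunov stability together with the Poisson stability of a full-measure set of points in a neighborhood $\mathcal D$ of $\bar d$, which is exactly the neutrally stable / recurrence hypothesis needed on the exosystem; and Remark~\ref{remark2} (via \cite[Theorem~1, Remark~3]{ref49}) gives that the feedback $u=\delta$ renders the origin of the unforced system \eqref{eq6a} (with $d$ frozen at $\bar d$) asymptotically stable, which is the detectability/stabilizability prerequisite \cite[Assumption~3.2]{ref49.2}. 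With these three facts in hand, system \eqref{eq6}–\eqref{eq6c} falls precisely in the class covered by the reference theorem.

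\textbf{Necessity.} For the ``only if'' direction I would argue as in \cite[Section~3.2]{ref49.2}: suppose a controller \eqref{uuu} solves Problem~\ref{problem1}. By Properties~\ref{p1}–\ref{p2} the closed-loop trajectories are bounded and the output tends to zero. Using the center manifold theorem for the closed-loop vector field near $(\bar x,\bar d)$ — the exosystem dynamics $\dot d=S(d)$ contribute the purely center part by Assumption~\ref{ass3}, while the $x$-subsystem is exponentially attractive to it by the stabilizing property — one obtains an invariant manifold of the form $x=\boldsymbol{\pi}(d)$ for $d$ near $\bar d$. Restricting the closed-loop equations to this manifold and writing $\boldsymbol{u}(d):=k(\boldsymbol{\pi}(d),d)$, the invariance condition is exactly \eqref{eq25a}, and the asymptotic output-zeroing forces $h(\boldsymbol{\pi}(d),d)\equiv\boldsymbol{0}_n$ on the (dense, by Poisson stability) union of positive orbits, hence everywhere on $\mathcal D$ by continuity, giving \eqref{eq25b}. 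Setting $\boldsymbol{x}(d):=\boldsymbol{\pi}(d)$ completes this direction.

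\textbf{Sufficiency.} For the ``if'' direction, given smooth $\boldsymbol{x}(d),\boldsymbol{u}(d)$ solving \eqref{eq25}, I would perform the standard error coordinate change $\tilde x:=x-\boldsymbol{x}(d)$; differentiating and using \eqref{eq25a} shows that $\tilde x=\boldsymbol{0}$ is an invariant manifold of the composite system, on which $h=\boldsymbol{0}_n$ by \eqref{eq25b}. It then remains to stabilize this manifold: one adds to the feedforward $\boldsymbol{u}(d)$ a stabilizing feedback term built from the gain provided by Remark~\ref{remark2} (this is why $\delta$ and the augmented dynamics \eqref{eq4.6} were introduced), so that the $\tilde x$-dynamics are locally exponentially stable uniformly in $d\in\mathcal D$; boundedness of $d$ from Assumption~\ref{ass3} then yields Property~\ref{p1}, and $\tilde x(t)\to\boldsymbol{0}$ together with $h(\boldsymbol x(d),d)\equiv\boldsymbol 0_n$ yields Property~\ref{p2}.

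\textbf{Main obstacle.} The delicate point is not the algebra but the verification that our specific $f,g,h,S$ satisfy the regularity hypotheses of the cited theorem — in particular that the linearization of \eqref{eq6a} under $u=\delta$ at $(\bar x,\bar d)$ is Hurwitz (so that the center manifold splits cleanly as ``$x$-part exponentially stable, $d$-part neutrally stable''), which is where Assumptions~\ref{ass1} and~\ref{ass2} and the cost-shaping dynamics \eqref{eq4.6} enter, and the transfer of the output-zeroing identity from the Poisson-recurrent orbits to all of $\mathcal D$. Once those are checked, the theorem follows by direct application of \cite[Theorem~3.8]{ref49.2}.
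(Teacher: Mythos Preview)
Your proposal is correct and takes essentially the same approach as the paper: the paper's proof consists solely of the citation ``See \cite[Theorem~3.8]{ref49.2}'' together with a footnote observing that Remark~\ref{remark2} furnishes the stabilizability hypothesis \cite[Assumption~3.2]{ref49.2}. Your write-up simply unpacks that citation, checking the same three prerequisites (steady state from Assumption~\ref{ass1}, exosystem neutral/Poisson stability from Assumption~\ref{ass3}, and stabilizability from Remark~\ref{remark2}) and sketching the center-manifold necessity and feedforward-plus-feedback sufficiency arguments that constitute the proof of \cite[Theorem~3.8]{ref49.2} itself.
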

\begin{proof}
See\footnote{Note that by virtue of Remark~\ref{remark2}, we do not need \cite[Assumption~3.2]{ref49.2}.} \cite[Theorem~3.8]{ref49.2}.	
\end{proof}
The Partial Differential Equation (PDE) \eqref{eq25a} together with the algebraic equation \eqref{eq25b} is called \emph{regulator equation} and, from Theorem~\ref{th1}, it follows that the solvability of the regulator equation \eqref{eq25} is the key condition for the solvability of Problem~\ref{problem1}. 
\subsection{Controller Design}
In this subsection, a novel control scheme is designed for solving Problem~\ref{problem1} and, consequently, achieving Objective~1 in presence of \emph{time-varying} uncontrolled power injections. More precisely, we first analyze the zero dynamics of system  \eqref{eq6} in order to make the regulator equation \eqref{eq25} simpler.
Then, inspired by the output regulation control theory \cite{ref49.2}, we present the proposed control scheme. 

First, let $\boldsymbol{x}(d)$ in \eqref{eq25}  be partitioned as  $\boldsymbol{x}(d):=\col(\boldsymbol{x}^a(d), \boldsymbol{x}^b(d))$, with $\boldsymbol{x}^a(d):=\col\big(\boldsymbol{\theta}(d),\boldsymbol{\omega}(d)\big)$ and $\boldsymbol{x}^b(d):=\col\big(\boldsymbol{V}(d),\boldsymbol{P}_c(d),\boldsymbol{\delta}(d)\big)$. Then, consider the following PDE:
\begin{equation}\label{eq40}
\frac{\partial \boldsymbol{x}^b(d)}{\partial d}S(d)=\varrho\big(\boldsymbol{x}^b(d),d\big),
\end{equation}
with 
\begin{equation}\label{F}
\begin{split}
\varrho\big(&\boldsymbol{x}^b(d),d\big):=\\
&\left(\begin{array}{c}
\tau_v^{-1}\big(-\chi_d E\big(\boldsymbol{x}^b(d)\big)\boldsymbol{V}(d)+\bar{E}_f\big)\\
\tau_c^{-1}\big(-\boldsymbol{P}_c(d)+u_e^\ast(\boldsymbol{x}(d),d)\big)\\
\tau_{\delta}^{-1}\big(-\boldsymbol{\delta}(d) +\boldsymbol{P}_c(d)-\xi^{-1}QL^{\mathrm{com}}\big(Q\boldsymbol{\delta}(d) +R\big)\big)
\end{array}\right),
\end{split}
\end{equation} 
where $u_e^\ast(\boldsymbol{x}(d),d)$ will be defined in the following theorem.  
Note that in \eqref{F}, we have replaced $E\big(\boldsymbol{\theta}(d)\big)$ by $E\big(\boldsymbol{x}^b(d)\big)$. This follows from recalling that for each $i=1,\dots,n$, the $i$-th output $h_i$ of system \eqref{eq6} has relative degree equal to 2 (see Lemma~\ref{lemma 1}). Then, by considering the output and its first-time derivative being identically zero, $\boldsymbol{\omega}(d)$ and $\boldsymbol{\theta}(d)$ can be expressed as the solutions to
\begin{equation}\label{hhh}
\begin{split}
\boldsymbol{0}_n&=\boldsymbol{\omega}(d)\\
\boldsymbol{0}_n&=-\boldsymbol{P}_c(d)+\Gamma d-\mathcal{A}\Upsilon\big(\boldsymbol{V}(d)\big)\sin\big(\boldsymbol{\theta}(d)\big),
\end{split}
\end{equation}
respectively. 

In the following theorem, we propose a controller solving Problem~\ref{problem1}.
\begin{theorem}\textbf{(Output regulation based controller).}\label{th2}
Let Assumptions~\ref{ass1}--\ref{ass3} hold and suppose that there exists a solution to \eqref{eq40} for $d\in\mathcal{D}$.
Consider system \eqref{eq6} in closed-loop with 
\begin{equation}\label{eq40.2}
\begin{split}
	{u}=&~{u}^\ast_e\big(\boldsymbol{x}(d),d\big)+K_x\big(x-\boldsymbol{x}(d)\big),\\
\end{split}
\end{equation}
where
\begin{equation}\label{eq41}
	\begin{split}
		u_e^\ast\big(\boldsymbol{x}(d),d\big):=&-\tau_c\tau_v^{-1}\mathcal{A}[\sin(\boldsymbol{\theta}(d))]\Upsilon(\boldsymbol{V}(d))\vert \mathcal{A}\vert [\boldsymbol{V}(d)]^{-1}\\
		& \Big(\chi_dE\big(\boldsymbol{\theta}(d)\big)\boldsymbol{V}(d)-\bar{E}_f\Big)+\boldsymbol{P}_c(d) -\tau_c\Gamma S(d),\\
	\end{split}
\end{equation}
 $\boldsymbol{\theta}(d)$ is the solution to \eqref{hhh} and $K_x:=\big(\boldsymbol{0}_{n\times m}~ \boldsymbol{0}_{n\times n}~ \boldsymbol{0}_{n\times n}~ \boldsymbol{0}_{n\times n}~ \mathds{I}_n\big)$. Then, the trajectories of the closed-loop system \eqref{eq6}, \eqref{eq40.2} starting sufficiently close to $(\bar{\theta}, \boldsymbol{0}, \bar{V}, \bar{P}_c, \bar{\delta}, \bar{d})$ are bounded and converge to the set where the frequency deviation is equal to zero, achieving Objective~1. 
\end{theorem}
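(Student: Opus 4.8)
The plan is to identify the controller \eqref{eq40.2} with the feedforward-plus-error-feedback solver produced by the sufficiency part behind Theorem~\ref{th1}. Concretely, I will (i) exhibit smooth maps $\boldsymbol{x}(d),\boldsymbol{u}(d)$ that solve the regulator equation \eqref{eq25} and for which $\boldsymbol{u}(d)=u^\ast_e\big(\boldsymbol{x}(d),d\big)$, (ii) argue that $K_x$ makes the relevant error-dynamics linearization Hurwitz, and then (iii) conclude that \eqref{eq40.2} is exactly the controller supplied by the local output-regulation theory, so that Properties~\ref{p1} and~\ref{p2} hold, i.e.\ all closed-loop trajectories starting near $(\bar{\theta},\boldsymbol{0},\bar{V},\bar{P}_c,\bar{\delta},\bar{d})$ are bounded and $\omega(t)\to\boldsymbol{0}_n$, which is Objective~\ref{objective1} and, equivalently, convergence to the set where the frequency deviation vanishes.

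For step (i) I set $\boldsymbol{x}(d):=\col\big(\boldsymbol{\theta}(d),\boldsymbol{0}_n,\boldsymbol{V}(d),\boldsymbol{P}_c(d),\boldsymbol{\delta}(d)\big)$ and $\boldsymbol{u}(d):=u^\ast_e\big(\boldsymbol{x}(d),d\big)$ as in \eqref{eq41}, where $\big(\boldsymbol{V}(d),\boldsymbol{P}_c(d),\boldsymbol{\delta}(d)\big)$ is the assumed solution of \eqref{eq40}--\eqref{F} and $\boldsymbol{\theta}(d)$ is the branch of \eqref{hhh} through $\bar{\theta}$, taken in the form $\boldsymbol{\theta}(d)=\mathcal{A}^\top\boldsymbol{\varphi}(d)$ (such a smooth $\boldsymbol{\varphi}(d)$ exists near $\bar{d}$ by the implicit function theorem, since the Jacobian of the second relation of \eqref{hhh} in the angle variable is, under Assumption~\ref{ass2}, a Laplacian of the connected graph $\mathcal{G}$ restricted to $\boldsymbol{1}_n^\perp$, hence invertible). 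Comparing \eqref{eq25a} componentwise with $f$ in \eqref{eq6.1}: the $\boldsymbol{V}$-, $\boldsymbol{P}_c$- and $\boldsymbol{\delta}$-rows reduce to \eqref{eq40} once $\boldsymbol{u}(d)=u^\ast_e\big(\boldsymbol{x}(d),d\big)$ is inserted and $\boldsymbol{\omega}(d)\equiv\boldsymbol{0}_n$ is used to delete the $-\xi^{-1}\boldsymbol{\omega}(d)$ term, and writing $E\big(\boldsymbol{x}^b(d)\big)$ for $E\big(\boldsymbol{\theta}(d)\big)$ in \eqref{F} is legitimate because $\boldsymbol{\theta}(d)$ is a function of $\boldsymbol{x}^b(d)$ through \eqref{hhh}. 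The output relation \eqref{eq25b} and the $\boldsymbol{\omega}$-row of \eqref{eq25a} are immediate from \eqref{hhh}: $h\big(\boldsymbol{x}(d),d\big)=\boldsymbol{\omega}(d)=\boldsymbol{0}_n$, and $\tfrac{\partial\boldsymbol{\omega}(d)}{\partial d}S(d)=\boldsymbol{0}_n=\tau_p^{-1}\big(\boldsymbol{P}_c(d)+\Gamma d-\mathcal{A}\Upsilon(\boldsymbol{V}(d))\sin(\boldsymbol{\theta}(d))\big)$. Using $S(\bar{d})=\boldsymbol{0}$ and \eqref{eq4}, one also checks $\boldsymbol{x}(\bar{d})=\bar{x}$ and $\boldsymbol{u}(\bar{d})=\bar{u}$.

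The one nontrivial row is the $\boldsymbol{\theta}$-row of \eqref{eq25a}, $\tfrac{\partial\boldsymbol{\theta}(d)}{\partial d}S(d)=\mathcal{A}^\top\boldsymbol{\omega}(d)=\boldsymbol{0}_m$, and \eqref{eq41} is precisely what makes it hold -- indeed \eqref{eq41} is derived by imposing this identity and solving for $\boldsymbol{u}(d)$. Differentiating the second relation of \eqref{hhh} along $S(d)$ and using that $\Upsilon$ is diagonal and built from $\boldsymbol{V}$ yields an identity $\mathcal{A}\,\Upsilon(\boldsymbol{V}(d))\,[\cos(\boldsymbol{\theta}(d))]\,\tfrac{\partial\boldsymbol{\theta}(d)}{\partial d}S(d)=\Gamma S(d)-\tfrac{\partial\boldsymbol{P}_c(d)}{\partial d}S(d)-\mathcal{A}\,[\sin(\boldsymbol{\theta}(d))]\,\dot{\boldsymbol{\Upsilon}}(d)$, where $\dot{\boldsymbol{\Upsilon}}(d)$ is the derivative along $S(d)$ of the line quantities $\Upsilon_k=\boldsymbol{V}_i(d)\boldsymbol{V}_j(d)B_{ij}$, a fixed linear function of $\tfrac{\partial\boldsymbol{V}(d)}{\partial d}S(d)$. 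Substituting $\tfrac{\partial\boldsymbol{V}(d)}{\partial d}S(d)$ and $\tfrac{\partial\boldsymbol{P}_c(d)}{\partial d}S(d)$ from \eqref{eq40}--\eqref{F} together with the expression \eqref{eq41} for $u^\ast_e$ cancels the entire right-hand side, leaving $\mathcal{A}\,\Upsilon(\boldsymbol{V}(d))\,[\cos(\boldsymbol{\theta}(d))]\,\tfrac{\partial\boldsymbol{\theta}(d)}{\partial d}S(d)=\boldsymbol{0}_n$; since $\Upsilon(\boldsymbol{V}(d))[\cos(\boldsymbol{\theta}(d))]$ is positive diagonal for $d$ near $\bar{d}$ (Assumption~\ref{ass2}) and $\tfrac{\partial\boldsymbol{\theta}(d)}{\partial d}S(d)=\mathcal{A}^\top\tfrac{\partial\boldsymbol{\varphi}(d)}{\partial d}S(d)\in\operatorname{range}(\mathcal{A}^\top)=(\ker\mathcal{A})^\perp$, pairing this with $\tfrac{\partial\boldsymbol{\theta}(d)}{\partial d}S(d)$ forces $\tfrac{\partial\boldsymbol{\theta}(d)}{\partial d}S(d)=\boldsymbol{0}_m$. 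Here the relative-degree-two structure of \eqref{eq6} (Lemma~\ref{lemma 1}), which makes the elimination of $u$ from $\ddot{\omega}$ well posed, and the connectedness of $\mathcal{G}$ are exactly what is used, and this is the origin of the factor $\tau_c\tau_v^{-1}\mathcal{A}[\sin(\boldsymbol{\theta}(d))]\Upsilon(\boldsymbol{V}(d))\cdots$ in \eqref{eq41}. Thus $\big(\boldsymbol{x}(d),\boldsymbol{u}(d)\big)$ solves \eqref{eq25} and, by Theorem~\ref{th1}, Problem~\ref{problem1} is solvable.

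For steps (ii)--(iii): in the error coordinate $\tilde{x}:=x-\boldsymbol{x}(d)$, since $g$ is constant and \eqref{eq40.2} reads $u=\boldsymbol{u}(d)+K_x\tilde{x}$, subtracting \eqref{eq25a} gives $\dot{\tilde{x}}=f\big(\boldsymbol{x}(d)+\tilde{x},d\big)-f\big(\boldsymbol{x}(d),d\big)+gK_x\tilde{x}$, which vanishes at $\tilde{x}=\boldsymbol{0}$ for every $d$; its linearization at $\tilde{x}=\boldsymbol{0}$, $d=\bar{d}$ equals $\tfrac{\partial f}{\partial x}(\bar{x},\bar{d})+gK_x$, i.e.\ the linearization at $\bar{x}$ of \eqref{eq6a} closed by the state-feedback controller of Remark~\ref{remark2} under the constant injection $\Gamma\bar{d}$, hence Hurwitz. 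Therefore $u^\ast_e(\boldsymbol{x}(d),d)=\boldsymbol{u}(d)$ is the feedforward and $K_x$ the stabilizing gain in the sufficiency construction underlying Theorem~\ref{th1} (cf.\ \cite[Theorem~3.8]{ref49.2}), so that construction produces precisely \eqref{eq40.2}: Properties~\ref{p1} and~\ref{p2} hold -- boundedness of $d(t)$ near $\bar{d}$ (Assumption~\ref{ass3}) keeps $\boldsymbol{x}(d(t))$, hence $x(t)$, bounded, while the hyperbolic error dynamics drive $\tilde{x}(t)\to\boldsymbol{0}$, so that $\omega(t)=h(x(t),d(t))=\tilde{x}_\omega(t)\to\boldsymbol{0}_n$. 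The main obstacle is step (i), and inside it the $\boldsymbol{\theta}$-row identity: verifying that the specific combination of terms in \eqref{eq41} annihilates the right-hand side of the differentiated constraint \eqref{hhh} is the genuine computational heart of the argument, and is exactly where Lemma~\ref{lemma 1} and Assumption~\ref{ass2} are indispensable.
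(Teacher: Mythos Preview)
Your proposal is correct and follows the same overall strategy as the paper: construct $(\boldsymbol{x}(d),\boldsymbol{u}(d))$ solving the regulator equation \eqref{eq25} from the relative-degree-two structure, then invoke Theorem~\ref{th1} together with the stabilizing feedback $K_x$ supplied by Remark~\ref{remark2}. The difference is purely methodological. The paper first derives $u_e$ as the equivalent control from $L_{f_a}^2 h + L_{g_a}L_{f_a}h\cdot u_e=\boldsymbol{0}_n$, restricts it to the zero-output manifold to obtain $u_e^\ast$, writes down the zero dynamics, and then invokes \cite[Theorem~3.26]{ref49.2} as a black box to conclude that $\boldsymbol{x}(d)=\col\big(\boldsymbol{\theta}(d),\boldsymbol{0}_n,\boldsymbol{x}^b(d)\big)$, $\boldsymbol{u}(d)=u_e^\ast(\boldsymbol{x}(d),d)$ solves \eqref{eq25}. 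You instead verify \eqref{eq25} componentwise by hand, and your treatment of the $\boldsymbol{\theta}$-row (differentiating \eqref{hhh} along $S(d)$, substituting the $\boldsymbol{V}$- and $\boldsymbol{P}_c$-rows of \eqref{eq40}--\eqref{F} with \eqref{eq41}, and using $\partial_d\boldsymbol{\theta}\cdot S\in\operatorname{range}(\mathcal{A}^\top)$ together with positivity of $\Upsilon(\boldsymbol{V})[\cos(\boldsymbol{\theta})]$) is precisely the computation that \cite[Theorem~3.26]{ref49.2} packages abstractly. Your route is more self-contained and makes transparent \emph{why} \eqref{eq41} has the specific form it does; the paper's route is shorter but requires the reader to supply Huang's zero-dynamics reduction. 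Both reach the same controller and the same conclusion.
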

\begin{proof}
{See Subsection~\ref{Appendix B} in the Appendix.}
\end{proof}	
We notice that in Theorem~\ref{th2}, the controller \eqref{eq40.2} is designed based on the solution to \eqref{eq40}, which we have provisionally assumed to exist. 
Now, we discuss in the following the condition for the solvability of the regulator equation \eqref{eq25}, implying the solvability also of \eqref{eq40}.
\begin{remark}\textbf{(Existence of solution to the regulator equation \eqref{eq25}).}
\label{rm:solutionRE}
The condition for the solvability of the regulator equation \eqref{eq25}, implying the solvability also of \eqref{eq40}, follows from \cite[Corollary~3.27]{ref49.2}, \textit{i.e.}, the solution to \eqref{eq25} exists if all the eigenvalues of the matrix
\begin{equation}\label{hyper1}
A:=\frac{\partial\varrho(x^b,d)}{\partial x^b}\bigg\vert_{(x,d)=(\bar{x},\bar{d})}
\end{equation}
 have nonzero real parts. Furthermore, in Proposition~\ref{proposition2} in Subsection~\ref{Appendix B} in the Appendix, we provide simpler conditions for the solvability of \eqref{eq25}. Indeed, instead of checking all the eigenvalues of the matrix $A$ given in \eqref{hyper1}, we show that it is sufficient to verify some conditions only on some minors of this matrix. 
\end{remark}
\begin{remark}\textbf{(Comparison with} \cite{ref16}\textbf{).}
Note that in this work we consider a class of exosystems describing the dynamics of both the load demand and renewable generation wider than the one considered in \cite{ref16} (see Remark~\ref{remark exo} for more details about the exosystem model). However, although the dimension of our controller is lower than the one in \cite{ref16}, it may be generally more complex. Indeed, the higher complexity seems to be necessary to deal with the nonlinearity of the considered exosystem. We also note that \cite{ref16} achieves OLFC  only for a particular class of exosystems, while  the achievement of OLFC in presence of a wider class of nonlinear \emph{time-varying} uncontrolled power injections appears complex and still challenging. 
\end{remark}

In order to perform also costs minimization besides frequency regulation, we address in the next section the \emph{approximate} OLFC (\mbox{$\epsilon-$OLFC}) problem (see Objective~2).
More precisely, we propose a control algorithm based on an approximate output regulation method, which uses the solution to a PDE that is solved numerically.

\section{Approximate Optimal Load Frequency Control}
\label{sec:AOR}
In the previous section, following the classical output regulation control methodology, we have designed the controller~\eqref{eq40.2} achieving Objective~1. In this section, in order to achieve Objective~2, 
	 we use an approximate output regulation method that solves numerically only the PDE part of the regulator equation and propose a controller based on this solution. 

Consider \eqref{eq6a}, \eqref{eq6b} and the following output
\begin{equation}\label{eq6.}
q(x) = \col\big(\omega, P_c\big).
\end{equation}
 The reference signal is defined as $q^\mathrm{ref}:=\col\big(\boldsymbol{0}_n, P_c^\mathrm{opt}\big)$, where the optimal power generation value $P_c^\mathrm{opt}$ given by \eqref{optimal} is \emph{time-varying} since the uncontrolled power injection $P_d$ is \emph{time-varying} as well.
 Therefore, the tracking error can be defined as
\begin{equation}\label{eq6.0}
e(t) := q(x(t))-q^\mathrm{ref}(d(t)).
\end{equation}
Now, as in the previous section, we first define the nonlinear output regulation problem for system \eqref{eq6a}, \eqref{eq6b} and \eqref{eq6.0} as follows:
\begin{problem}\textbf{(Approximate output regulation).}\label{problem2}
Let the initial condition  $\big(x(0),d(0)\big)$ of system \eqref{eq6a}, \eqref{eq6b} and \eqref{eq6.0} be sufficiently close to the equilibrium point $(\bar{x},\bar{d})$ satisfying \eqref{eq4}. Then, design a state feedback controller
	\begin{equation}\label{uuuu}
	u(t)=k\big(x(t),d(t)\big)
	\end{equation}
	such that the closed-loop system \eqref{eq6a}, \eqref{eq6b}, \eqref{eq6.0} and \eqref{uuuu} has the following properties:
	
	\begin{enumerate}[]
		\item \begin{property}\label{p3}
			The trajectories $\col\big(x(t),d(t)\big)$ of the closed-loop system exist and are bounded for all $t\geq0$;
		\end{property}
		\item \begin{property}\label{p4}
			The trajectories $\col\big(x(t),d(t)\big)$ of the closed-loop system  satisfy $\lim _{t \rightarrow \infty} \Vert e(t)\Vert \leq \epsilon$, for sufficiently small $\epsilon\in\R_{\geq 0}$, achieving Objective~2.
		\end{property}
	\end{enumerate}
\end{problem}

Then, from Theorem~\ref{th1}, the regulator equation \eqref{eq25} for system \eqref{eq6a}, \eqref{eq6b} and \eqref{eq6.0} becomes
\begin{subequations}\label{eq6.1..}
\begin{align}\label{eq6.1..a}
\dfrac{\partial \boldsymbol{x}}{\partial d}S(d) =&~f(\boldsymbol{x}(d),d)+g(\boldsymbol{x}(d),d)\boldsymbol{u}(d) \\ \label{eq6.1..b}
\boldsymbol{0}_{2n} =&~q(\boldsymbol{x}(d))-q^\mathrm{ref}(d),
\end{align}
\end{subequations}
and the solvability of~\eqref{eq6.1..} implies the solvability of Problem~\ref{problem2}.



In the remaining of this section, we compute the solution to (\ref{eq6.1..a}) numerically and present an algorithm that uses the solution to (\ref{eq6.1..a}) to solve Problem~\ref{problem2} via the minimization of a penalty function (also called performance measure) depending on the tracking error \eqref{eq6.0}. Note that a similar problem, \textit{i.e.}, the design of a controller based on the output regulation theory ensuring that the penalty function is arbitrarily close to its infimum has been studied in \cite[Problems 4, 5]{twente} and it is referred to as \emph{suboptimal} output regulation.
Before introducing the algorithm, we note that the existence of a solution to \eqref{eq6.1..a} corresponds to the stability of the equilibrium point $\overline x$ of \eqref{eq6a} with constant exogenous inputs $\overline d$ such that $(\overline x, \overline d)$ satisfies \eqref{eq4} (see \cite[Theorem~4]{ref49.05} and the center manifold theorem \cite[Theorem~2.25]{ref49.2} for more details). Since, by virtue of Remark~3, the state-feedback controller $u=K_x x$ makes the closed-loop system asymptotically stable, then the solution to  (\ref{eq6.1..a}) exists. 

Now, similarly to \cite[Theorem~5]{ref49.05}, in the following proposition, a penalty function is introduced and the relationship between this penalty function and the tracking error \eqref{eq6.0} is investigated, showing how the numerical solutions to the PDE (\ref{eq6.1..a}) can be used for designing a controller satisfying (\ref{eq6.1..b}) when $t$ approaches infinity, solving Problem~\ref{problem2}. 
\begin{proposition}\textbf{(Approximate output regulation based controller).}\label{proposition1}
Let the {compact} set $\Lambda\ni 0$  and $ \bar{\epsilon}_0 , \bar{\epsilon}_1\in\R_{\geq 0}$ exist such that for all $ \bar{\epsilon}\in [\bar{\epsilon}_0 ,~\bar{\epsilon}_1]$ and $d\in\Lambda$, there exist sufficiently smooth functions $\boldsymbol{x}_{\bar{\epsilon}}(d)$ and $\boldsymbol{u}_{\bar{\epsilon}}(d)$ satisfying
{\begin{equation}\label{eq6.2}
I\big(\boldsymbol{u}_{\bar{\epsilon}}(d)\big):=\int_{\Lambda}\big\Vert q(\boldsymbol{x}_{\bar{\epsilon}}(v))-q^\mathrm{ref}(v)\big\Vert ^2~dv_1\dots dv_{n(n_d+1)}=\bar{\epsilon},
\end{equation}}
where 
 $\big(\boldsymbol{x}_{\bar{\epsilon}}(d), \boldsymbol{u}_{\bar{\epsilon}}(d)\big)$ represents the  solution to (\ref{eq6.1..a}). If the control input is designed as
\begin{equation}\label{law2}
	u=\boldsymbol{u} _{\bar{\epsilon}}(d)+K_x(x-\boldsymbol{x}_{\bar{\epsilon}}(d)),
\end{equation}
where $K_x$ is as in \eqref{eq40.2}, then, there exist $\alpha,  \beta, c \in\R_{>0}$ such that
\begin{equation}\label{eq6.3}
 \Vert e(t)\Vert \leq \alpha e^{-c t}+\beta \bar{\epsilon},~~~~~\forall \bar{\epsilon}\in [\bar{\epsilon} _0,  \bar{\epsilon} _1],
\end{equation}
where $e(t)$ is given by (\ref{eq6.0}).
\end{proposition}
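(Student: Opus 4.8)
The plan is to reduce the claim to an input-to-state stability estimate for the error dynamics of the closed-loop system under the feedback law \eqref{law2}, using the structure already exploited in Theorem~\ref{th2}. First I would introduce the error coordinates $\tilde{x} := x - \boldsymbol{x}_{\bar{\epsilon}}(d)$ and compute the dynamics of $\tilde{x}$ along trajectories of \eqref{eq6a}, \eqref{eq6b}. Because $\big(\boldsymbol{x}_{\bar{\epsilon}}(d), \boldsymbol{u}_{\bar{\epsilon}}(d)\big)$ solves only the PDE \eqref{eq6.1..a} (and not the algebraic part \eqref{eq6.1..b}), subtracting $\frac{\partial \boldsymbol{x}_{\bar{\epsilon}}}{\partial d}S(d) = f(\boldsymbol{x}_{\bar{\epsilon}}(d),d) + g(\boldsymbol{x}_{\bar{\epsilon}}(d),d)\boldsymbol{u}_{\bar{\epsilon}}(d)$ from $\dot x$ and using \eqref{law2} yields $\dot{\tilde{x}} = f(x,d) - f(\boldsymbol{x}_{\bar{\epsilon}}(d),d) + g(x,d)\big(\boldsymbol{u}_{\bar{\epsilon}}(d) + K_x\tilde{x}\big) - g(\boldsymbol{x}_{\bar{\epsilon}}(d),d)\boldsymbol{u}_{\bar{\epsilon}}(d)$. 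Since $g$ is constant (equal to $\col(\boldsymbol{0},\boldsymbol{0},\boldsymbol{0},\tau_c^{-1},\boldsymbol{0})$, independent of $(x,d)$), this collapses to $\dot{\tilde{x}} = \big(f(x,d) - f(\boldsymbol{x}_{\bar{\epsilon}}(d),d)\big) + gK_x\tilde{x}$, which is exactly the closed-loop vector field of \eqref{eq6a} with $u = K_x x$ evaluated along $x = \boldsymbol{x}_{\bar{\epsilon}}(d) + \tilde{x}$, seen as a perturbation of the equilibrium.

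Next I would invoke Remark~\ref{remark2} (i.e.\ \cite[Theorem~1, Remark~3]{ref49}): the state feedback $u = K_x x = \delta$ renders the equilibrium $\bar x$ of \eqref{eq6a} with constant injections locally asymptotically (indeed exponentially, by linearization) stable. Hence the $\tilde{x}$-dynamics, which have the form $\dot{\tilde{x}} = F(\tilde{x}, d)$ with $F(0,\bar d) = 0$ whenever $\boldsymbol{x}_{\bar{\epsilon}}$ solves the PDE at $d = \bar d$, admit a local ISS-Lyapunov function with respect to the ``input'' coming from the time variation of $d$, i.e.\ from $\frac{\partial \boldsymbol{x}_{\bar{\epsilon}}}{\partial d}S(d)$ being nonzero. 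By Assumption~\ref{ass3} the exosystem trajectory $d(t)$ stays in the compact neighbourhood $\mathcal{D}$ (and we restrict attention to $d(0)$ so that $d(t) \in \Lambda$), so this perturbation term is bounded; a standard converse-Lyapunov / ISS argument (e.g.\ along the lines of \cite[Theorem~5]{ref49.05}) then gives an estimate $\Vert \tilde{x}(t)\Vert \le \alpha_1 e^{-c t} + \beta_1 \sup_t \big\Vert \text{(residual)}\big\Vert$. Because $g$ has a trivial left-annihilator in the output directions and the PDE is solved, the relevant residual is controlled by how far $\boldsymbol{x}_{\bar{\epsilon}}$ is from solving the full regulator equation, which is quantified by \eqref{eq6.2}.

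Then I would pass from $\tilde{x}$ to the tracking error $e(t) = q(x(t)) - q^{\mathrm{ref}}(d(t))$. Writing $e(t) = \big(q(\boldsymbol{x}_{\bar{\epsilon}}(d(t))) - q^{\mathrm{ref}}(d(t))\big) + \big(q(x(t)) - q(\boldsymbol{x}_{\bar{\epsilon}}(d(t)))\big)$ and using that $q$ is (locally Lipschitz, in fact) linear, the second term is bounded by a constant times $\Vert\tilde{x}(t)\Vert$, hence by $\alpha e^{-ct} + \beta_1(\text{residual})$. The first term is the ``steady-state regulation defect''; here I would argue that its sup-norm over $\Lambda$ is controlled by the $L^2$-type quantity $I(\boldsymbol{u}_{\bar{\epsilon}}(d)) = \bar\epsilon$ in \eqref{eq6.2}, possibly after a smoothness/Sobolev embedding step using that $\boldsymbol{x}_{\bar{\epsilon}}$ is sufficiently smooth and $\Lambda$ is compact, so that $\sup_{d\in\Lambda}\Vert q(\boldsymbol{x}_{\bar{\epsilon}}(d)) - q^{\mathrm{ref}}(d)\Vert \le \beta_2 \bar\epsilon$ (adjusting constants). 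Combining the two contributions and relabelling constants yields $\Vert e(t)\Vert \le \alpha e^{-ct} + \beta \bar\epsilon$ uniformly for $\bar\epsilon \in [\bar\epsilon_0, \bar\epsilon_1]$, as claimed.

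The main obstacle I expect is making the link between the pointwise-in-time regulation defect and the integral penalty \eqref{eq6.2} rigorous and uniform in $\bar\epsilon$: bounding $\sup_{d\in\Lambda}\Vert q(\boldsymbol{x}_{\bar{\epsilon}}(d))-q^{\mathrm{ref}}(d)\Vert$ by a multiple of the $L^2$ norm $\bar\epsilon$ requires controlling the modulus of continuity of $d \mapsto q(\boldsymbol{x}_{\bar{\epsilon}}(d)) - q^{\mathrm{ref}}(d)$ independently of $\bar\epsilon$ (otherwise a small $L^2$ norm need not imply a small sup norm). This is presumably where the hypothesis that the $\boldsymbol{x}_{\bar{\epsilon}}$ are \emph{sufficiently smooth} and depend continuously on $\bar{\epsilon}$ over the compact interval $[\bar\epsilon_0,\bar\epsilon_1]$ is used, yielding a uniform bound on the relevant derivatives over $\Lambda$; a secondary, more routine obstacle is ensuring the ISS estimate for $\tilde{x}$ holds on a neighbourhood that is uniform in $\bar\epsilon$, which again follows from continuity of the data in $\bar\epsilon$ on the compact parameter set.
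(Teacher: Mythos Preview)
The paper gives no self-contained proof; it simply cites \cite[Theorem~5]{ref49.05}. Your sketch is a reasonable reconstruction of that argument and its overall architecture is correct: exponential decay of $\tilde x := x-\boldsymbol{x}_{\bar\epsilon}(d)$ via the stabilizing gain $K_x$, followed by the decomposition $e = \big(q(\boldsymbol{x}_{\bar\epsilon}(d))-q^{\mathrm{ref}}(d)\big) + \big(q(x)-q(\boldsymbol{x}_{\bar\epsilon}(d))\big)$, with the second piece controlled by $\|\tilde x\|$ since $q$ is linear.

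One detail in the middle step is off. Because $(\boldsymbol{x}_{\bar\epsilon},\boldsymbol{u}_{\bar\epsilon})$ solves the PDE \eqref{eq6.1..a} \emph{exactly} on all of $\Lambda$, the error dynamics $\dot{\tilde x}=F(\tilde x,d)$ satisfy $F(0,d)=0$ for \emph{every} $d\in\Lambda$, not only at $\bar d$; the term $\frac{\partial\boldsymbol{x}_{\bar\epsilon}}{\partial d}S(d)$ has already been absorbed by the PDE and there is no residual forcing in the $\tilde x$-equation. So an ISS estimate with respect to $d$ is not the right mechanism here. What is actually needed is that the Jacobian $\partial_{\tilde x}F(0,d)$ is Hurwitz \emph{uniformly} over the compact set $\Lambda$ (and uniformly in $\bar\epsilon\in[\bar\epsilon_0,\bar\epsilon_1]$), which follows from Remark~\ref{remark2} together with continuity and compactness; this gives the clean bound $\|\tilde x(t)\|\le \alpha_1 e^{-ct}$ directly. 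The approximation error then enters only through the first term of your decomposition, exactly as you say in the last paragraph. Your identification of the main obstacle --- controlling $\sup_{d\in\Lambda}\|q(\boldsymbol{x}_{\bar\epsilon}(d))-q^{\mathrm{ref}}(d)\|$ linearly in the integral quantity $\bar\epsilon$ of \eqref{eq6.2}, uniformly in $\bar\epsilon$ --- is accurate and nontrivial, and it is precisely this step that the paper delegates to \cite{ref49.05}.
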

\begin{proof}
{See \cite[Theorem~5]{ref49.05}.}
\end{proof}	
\setlength{\algomargin}{.5em}
\begin{algorithm}[!t]
	\caption{Approximate output regulation}\label{alg:1}
	\DontPrintSemicolon
	\SetArgSty{}
	\SetKwIF{If}{ElseIf}{Else}{if}{}{else if}{else}{end if}
	\SetKwFor{ForAll}{for all}{do}{end forall}
	\SetKwRepeat{Do}{do}{end}
	\textbf{Initialization:} choose $(\boldsymbol{x}_0(d),\tilde{\boldsymbol{u}}_0(d))$ such that $(\boldsymbol{x}_0(d),\tilde{\boldsymbol{u}}_0(d)+K_x\boldsymbol{x}_0(d))$ is solution to \eqref{eq6.1..a};\;\smallskip
	 choose $\bar{\epsilon}$;\;\smallskip
	  set $i \coloneqq 0$;\;\bigskip
	\While{$I(\tilde{\boldsymbol{u}}_i(d))>\bar{\epsilon}$}{\smallskip
	{compute the gradient of \eqref{eq6.2} to obtain $\tilde{\boldsymbol{u}}_{i+1}(d)$;\;\smallskip		
	solve \eqref{eq6.1..a} with $\boldsymbol{u}(d):=\tilde{\boldsymbol{u}}_{i+1}(d)+K_x\boldsymbol{x}_{i+1}(d)$ to obtain $\boldsymbol{x}_{i+1}(d)$;\; \smallskip
	set $i:=i+1$;\; \smallskip		
		}
}
\smallskip
set $\boldsymbol{x}_{\bar{\epsilon}}(d):=\boldsymbol{x}_i(d)$ and $\boldsymbol{u}_{\bar{\epsilon}}(d):=\tilde{\boldsymbol{u}}_i(d) + K_x \boldsymbol{x}_{\bar{\epsilon}}(d)$;\;
\bigskip
\textbf{Output:} The control input is given by \eqref{law2}.
\end{algorithm}

%
%

Then, based on Proposition \ref{proposition1} we use Algorithm \ref{alg:1} (see \cite[Algorithm~1]{ref49.05} for more details) to solve Problem~\ref{problem2}, achieving Objective~2. We refer the reader to \cite[Section~4]{ref49.05} for the details about the convergence proof of Algorithm~1.
Basically, Algorithm~\ref{alg:1} seeks to find iteratively a controller through solving  \eqref{eq6.1..a} numerically. Since both the frequency regulation and costs minimization are taken into account in the penalty function (\ref{eq6.2}), the obtained controller achieves Objective~2 with accuracy $\epsilon:=\beta\bar{\epsilon}$.

\begin{remark}\textbf{(Approximation error)}\label{remark6}
		Note that the approximation error $\bar{\epsilon}$ in Proposition~\ref{proposition1} may be due to the numerical inaccuracy of
		the solution $\big(\boldsymbol{x}_{\bar{\epsilon}}(d), \boldsymbol{u}_{\bar{\epsilon}}(d)\big)$ to \eqref{eq6.1..a} or the algebraic part \eqref{eq6.1..b} for unsolvable regulator equations \eqref{eq6.1..} \cite[Remark~6]{ref49.05}. 
		Furthermore, if Algorithm~\ref{alg:1} finds an approximated solution to \eqref{eq6.1..}, \textit{i.e.,} $\bar{\epsilon}>0$, then the controller \eqref{law2} achieves \mbox{$\epsilon-$OLFC} (Objective 2). Otherwise, if Algorithm~\ref{alg:1} finds an exact solution to \eqref{eq6.1..}, \textit{i.e.,} $\bar{\epsilon}=0$, then the  controller \eqref{law2} achieves OLFC. We also note that to achieve an high accuracy, it is sufficient to choose $\bar{\epsilon}$ sufficiently small. Then, we show via an extensive simulation analysis in the next section that the influence of such an error on the performance of the controlled system is negligible in practical applications.
\end{remark}

\begin{remark}\textbf{(Conditions for solvability of Problem~\ref{problem2}).}
Notably, the approximate output regulation control approach presented in Proposition~\ref{proposition1} does not require the solvability conditions provided in Proposition~\ref{proposition2} in Subsection~\ref{Appendix B} in the Appendix. 
Indeed, Algorithm~1 uses only the solution to \eqref{eq6.1..a} to obtain a controller solving Problem~\ref{problem2}, while  the classical output regulation control approach presented in Theorem~\ref{th2} needs the solution to the regulator equation \eqref{eq6.1..}.
\end{remark}

\begin{remark}\textbf{(Comparison with} \cite{ref50.1}\textbf{).}
A constructive approach for solving a linear-convex optimal steady-state problem with constant exogenous disturbances and parametric uncertainty is proposed in \cite{ref50.1}, which includes the OLFC problem as an application. Although this method solves the OLFC problem, the dynamics of the power network are assumed to be linear. Furthermore, the voltage and turbine-governor dynamics are neglected and the exogenous disturbances   (\textit{i.e.}, the uncontrolled power injections) are assumed to be constant. Differently, in this section we have proposed a control approach for nonlinear power networks, achieving approximate optimal load frequency control  (Objective~2) in presence of \emph{time-varying} uncontrolled power injections. 
\end{remark}

\begin{remark}\textbf{(Properties of the controllers \eqref{eq40.2} and \eqref{law2}).} \label{rem9}
{Note that the proposed control schemes \eqref{eq40.2} and \eqref{law2} together with the augmented dynamics \eqref{eq4.6} are fully distributed and require some information about the network parameters and the exosystems, which can be determined in practice from data analysis and engineering understanding.} 
	{Moreover, the PDE \eqref{eq40} or \eqref{eq6.1..} is solved only once and the knowledge of $\delta_i$ requires only local information and information from the neighboring areas, making the proposed control schemes scalable and independent from the network size.}
	{Also, a sensor for the conventional generation  is required at each node  to measure the generated power $P_c$ in order to implement the proposed control schemes \eqref{eq40.2} and \eqref{law2}.} {Furthermore, the theoretical results we have established in Sections \ref{sec:COR} and \ref{sec:AOR} hold also in the case that the parameters of the swing equations \eqref{eq2}, \eqref{eq4.1} and the exosystems dynamics \eqref{eq4.2} are different from one area to another.}
\end{remark}
%

\begin{figure}
\begin{small}
\begin{tikzpicture}[>=stealth',shorten >=1pt,auto,node distance=3.0cm,
                    semithick]
  \tikzstyle{every state}=[circle,thick,draw=black,fill=black!3,text=black]
   \hspace{1.4cm}
  \node[state] (A)                    {Area 1};
  \node[state]         (B) [above right of=A] {Area 2};
  \node[state]         (D) [below right of=A] {Area 4};
  \node[state]         (C) [below right of=B] {Area 3};

  \path[-] (A) edge             node {$B_{12}=28.1$} (B)
  		(D) edge	     node {$B_{14}=22.8$} (A)
           (B) edge              node {$B_{23}=30.7$} (C)
           (C) edge              node {$B_{34}=17.9$} (D);

  \path[<->] (A) edge [bend left, dashed, blue]          node {} (D)
             (C) edge [bend left, dashed, blue]         node {} (B)
           (C) edge [dashed, blue]          	node {} (A)
           (D) edge [bend left, dashed, blue]          	node {} (C);
\end{tikzpicture}
\caption{Scheme of the considered power network partitioned into four areas, where the solid and dashed lines represent the physical and communication networks, respectively. }
\label{f1}
\end{small}
\end{figure}
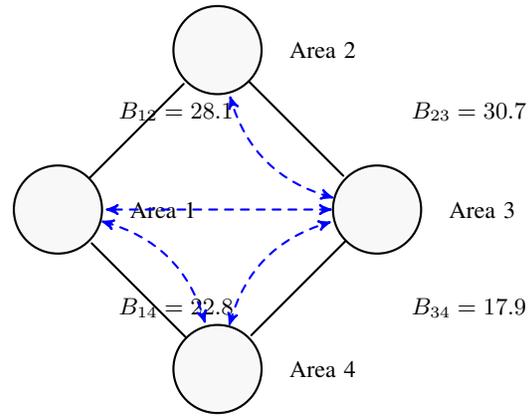

\begin{table}
\caption{System parameters} 
\centering 
\begin{tabular}{c p{1cm} p{1cm} p{1cm} p{1.2cm}} 
\toprule
Parameter  & Area 1 &  Area 2 &  Area 3  & Area 4 \\ [0.5ex]

\midrule 
$B_{ii}$ (p.u.)~         & -56.3       & -58.5      & -56.2      & -49.4 \\
~$q_i~ (\frac{\$10^4}{h})$       & 0.95        & 0.85            & 1.2      & 0.92 \\
$\tau_{v i}$ (s)~ ~~        & 6.32        & 6.63     & 7.15      & 6.46 \\
$X_{di}$ (p.u.)       & 1.76          & 1.81     & 1.87      & 1.91 \\
$X'_{di}$ (p.u.)       & 0.27    & 0.17     & 0.23      & 0.35 \\
 $E_{f_i}$(p.u.)       & 3.85         & 4.43     & 3.96      & 3.88 \\ 
$\tau_{p i}$ (p.u.)~     & 3.95         & 4.71     &5.23      & 4.17 \\
$\psi_i$ (p.u.)~~     & 1.82         & 1.61     & 1.33      & 1.55 \\ 
$\tau_{c i}$ (s)~ ~~      & 7.2         & 6.8     & 8.9      & 7.8 \\ 
$\tau_{\delta i}$ (s)~ ~~     & 0.23     & 0.23     & 0.23      & 0.23 \\
~~~~~~$\xi_i$ (Hz p.u.$^{-1}$)   & 0.73     & 0.73     & 0.73      & 0.73 \\ 
$\kappa_{0 i}$ (p.u.)~     & -8.78      & -8.82      & -8.69       & -8.58  \\ 
$s_{0 i}$ (p.u.)~     & 0.23      & 0.24     & 0.25       & 0.21 \\
$h_i$ (p.u.)~     & 9.63      & 9.71     & 9.59     & 9.68 \\ [1ex] 
\bottomrule 
\end{tabular}
\label{table2}
\end{table} 

\newlength\figureheight 
\newlength\figurewidth

\begin{figure}[t]
		\centering
		\includegraphics[width=\columnwidth]{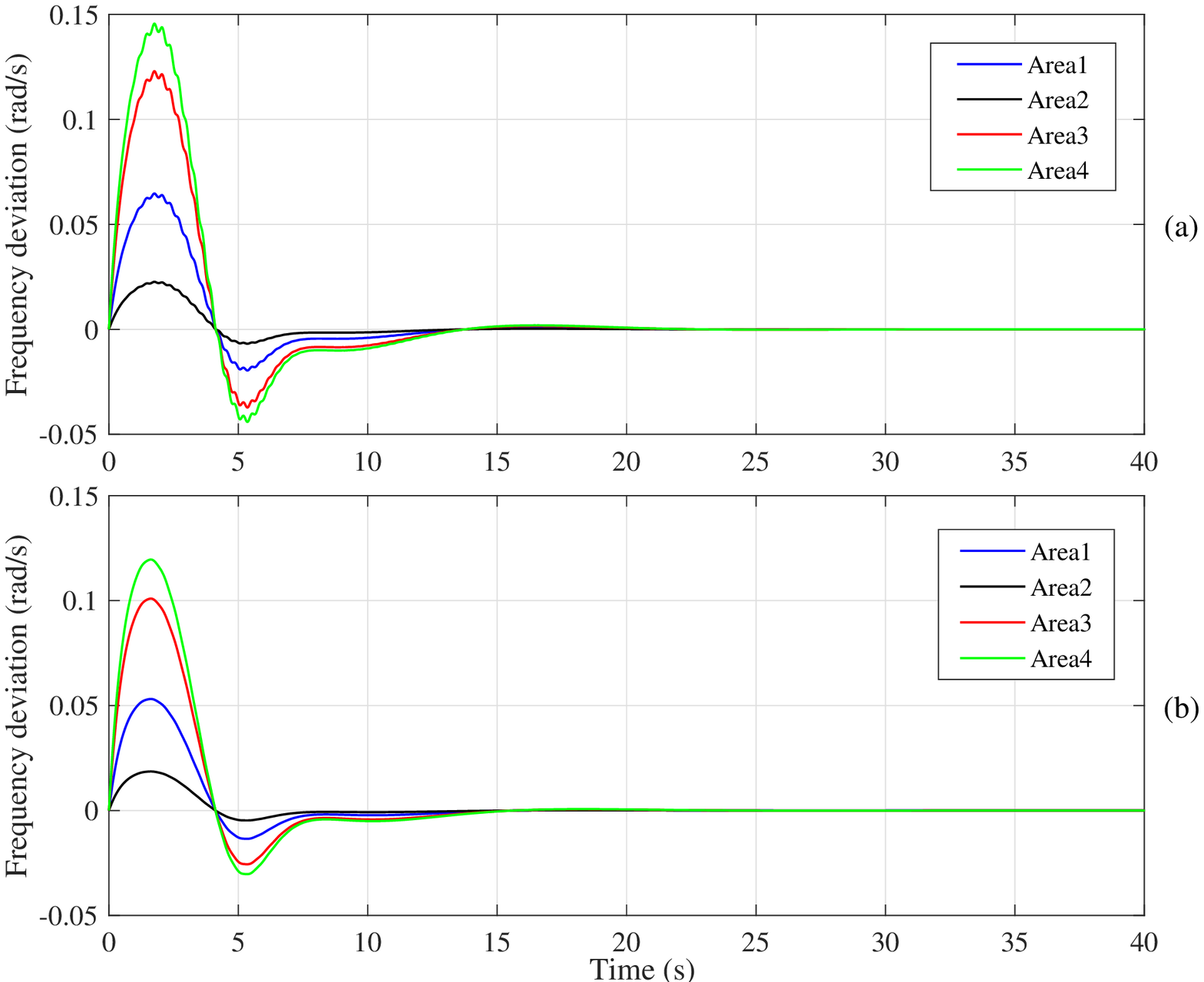}
		\caption{{Scenario~1. Frequency deviation: (a) classical output regulation control approach;
			 (b) approximate output regulation control approach.}}
		\label{f2}
	\end{figure}

\begin{figure}[t]
	\centering
	\includegraphics[width=\columnwidth]{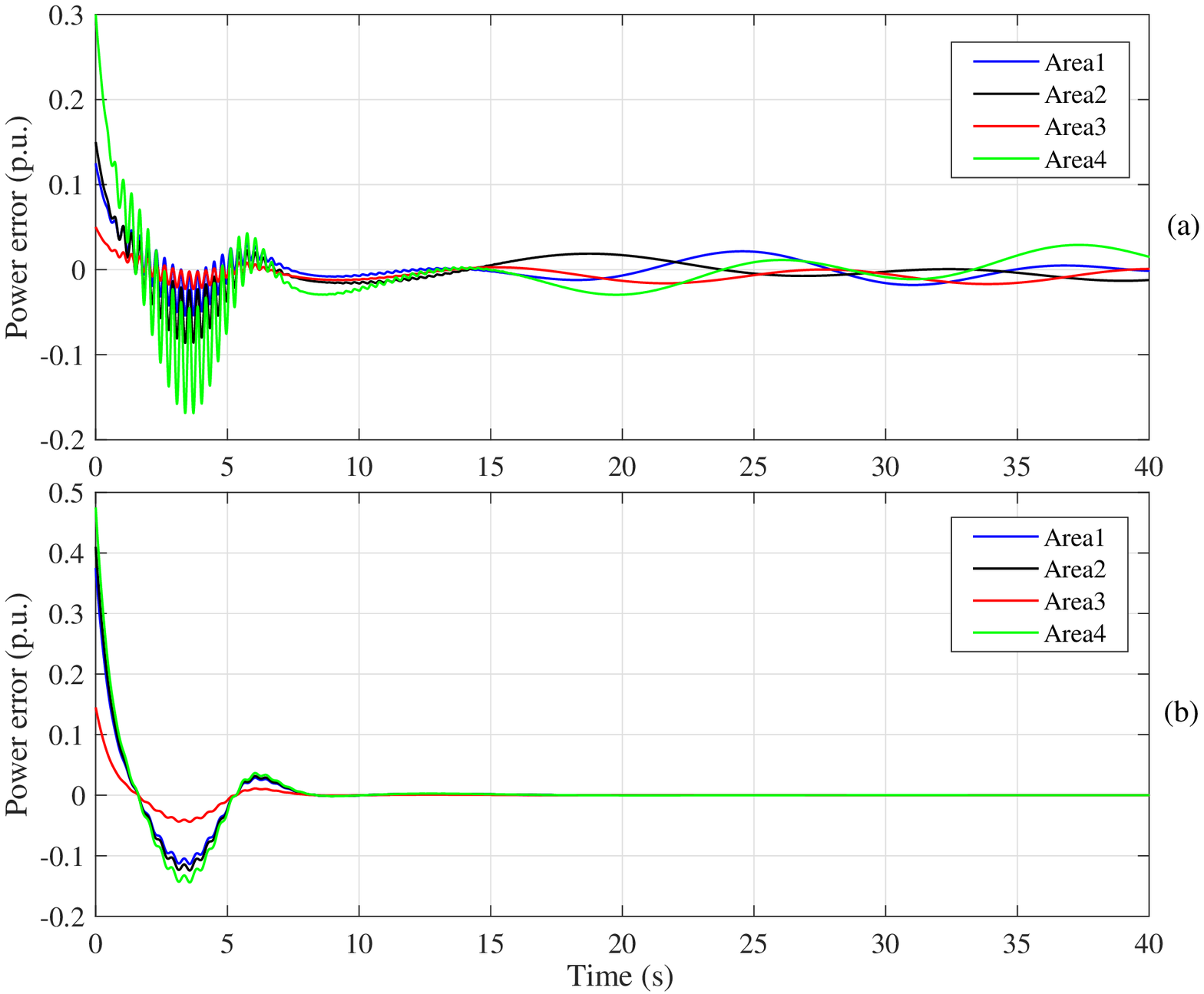}
	\caption{{Scenario~1. Power error: (a) classical output regulation control approach; (b) Approximate output regulation control approach.}}
	\label{f3}
\end{figure}

\begin{figure}[t]
	\centering
	\includegraphics[width=\columnwidth]{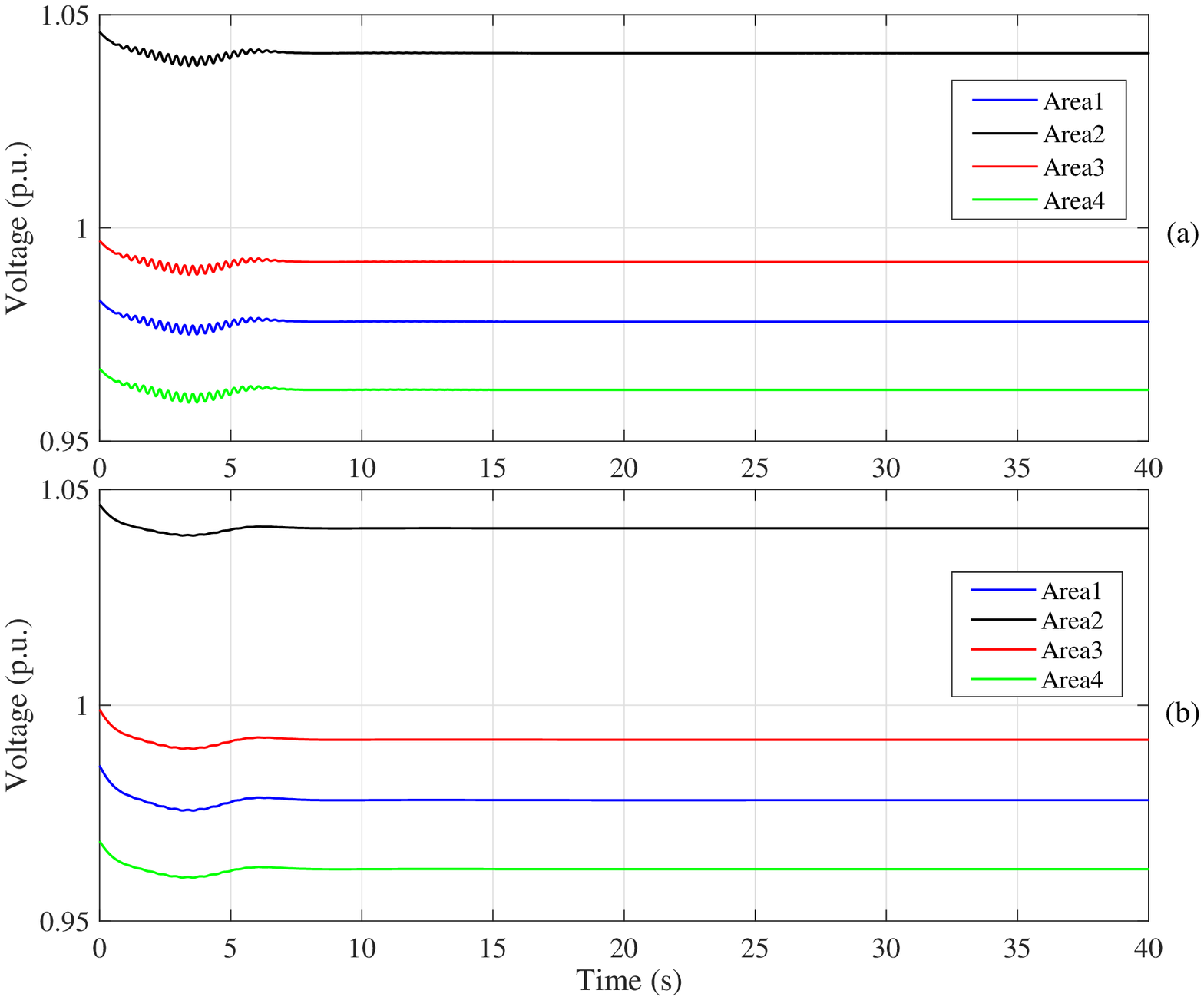}
	\caption{{Scenario~1. Voltage: (a) classical output regulation control approach;  (b) approximate output regulation control approach.}}
	\label{f4}
\end{figure}

\begin{figure}[t]
	\centering
	\includegraphics[width=\columnwidth]{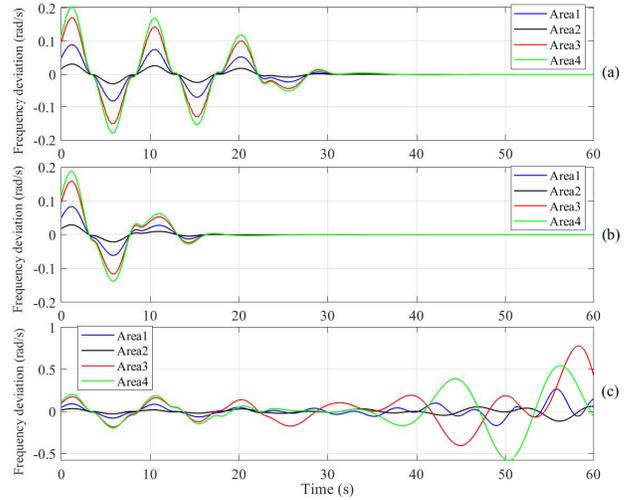}
	\caption{{Scenario~1 with initial conditions not sufficiently close to the desired equilibrium. Frequency deviation: (a) classical output regulation control approach;
			(b) approximate output regulation control approach; (c) linearization method.}}
	\label{flinear}
\end{figure}

\begin{figure}[t]
	\centering
	\includegraphics[width=\columnwidth]{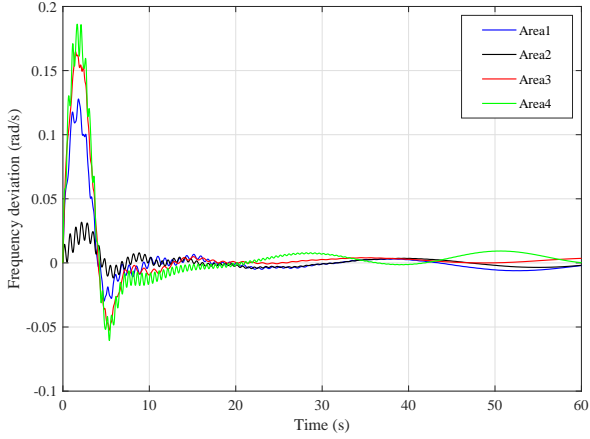}
	\caption{{Scenario~1. Frequency deviation: controller proposed in \cite{ref16}.}}
	\label{fpersis}
\end{figure}

  \begin{figure}[t]
	\centering
	\includegraphics[width=\columnwidth]{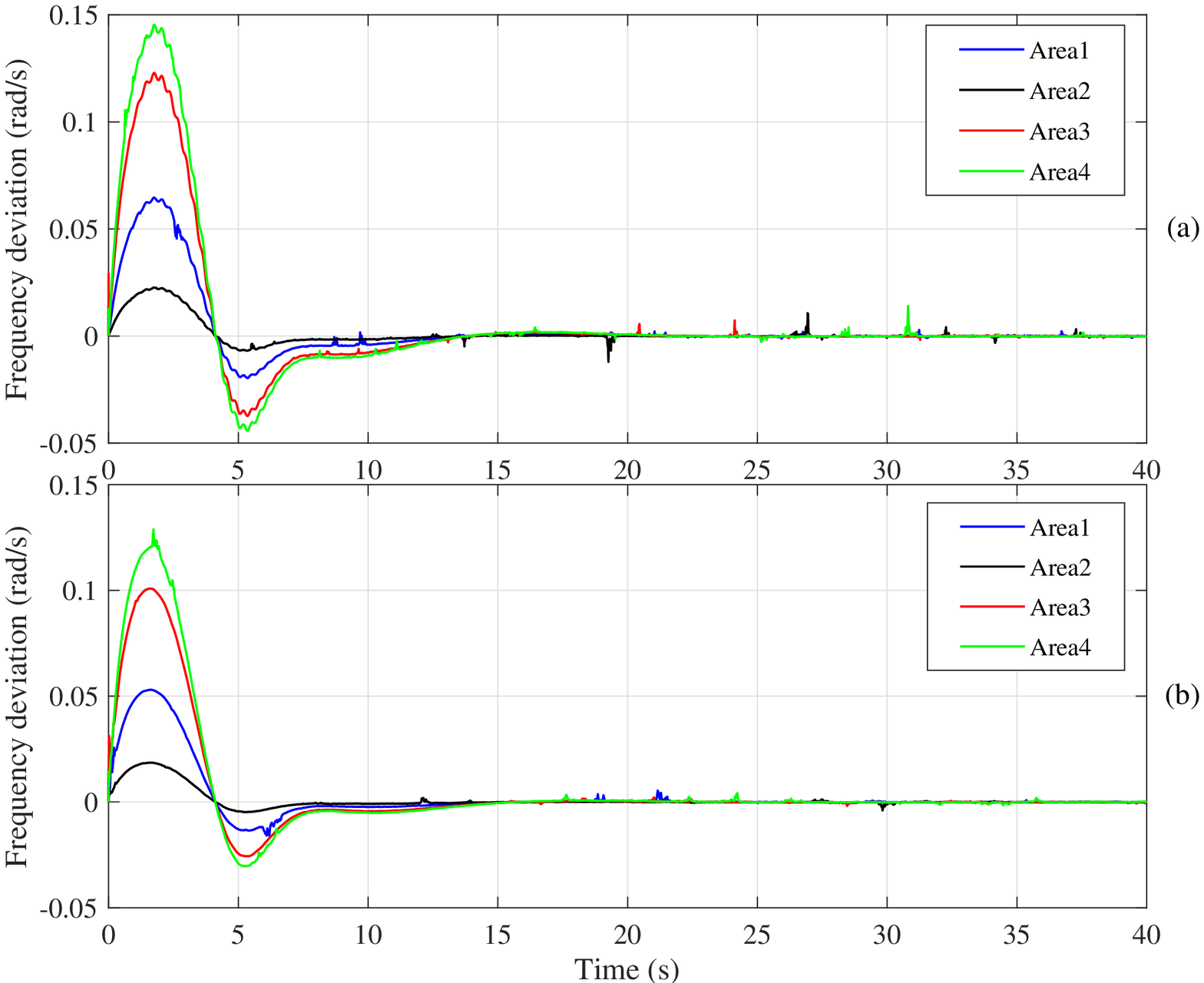}
	\caption{{Scenario~2. Frequency deviation: (a) classical output regulation control approach; (b) approximate output regulation control approach.}}
	\label{fnoise4}
\end{figure}

	\begin{figure}[t]
	\centering
	\includegraphics[width=\columnwidth]{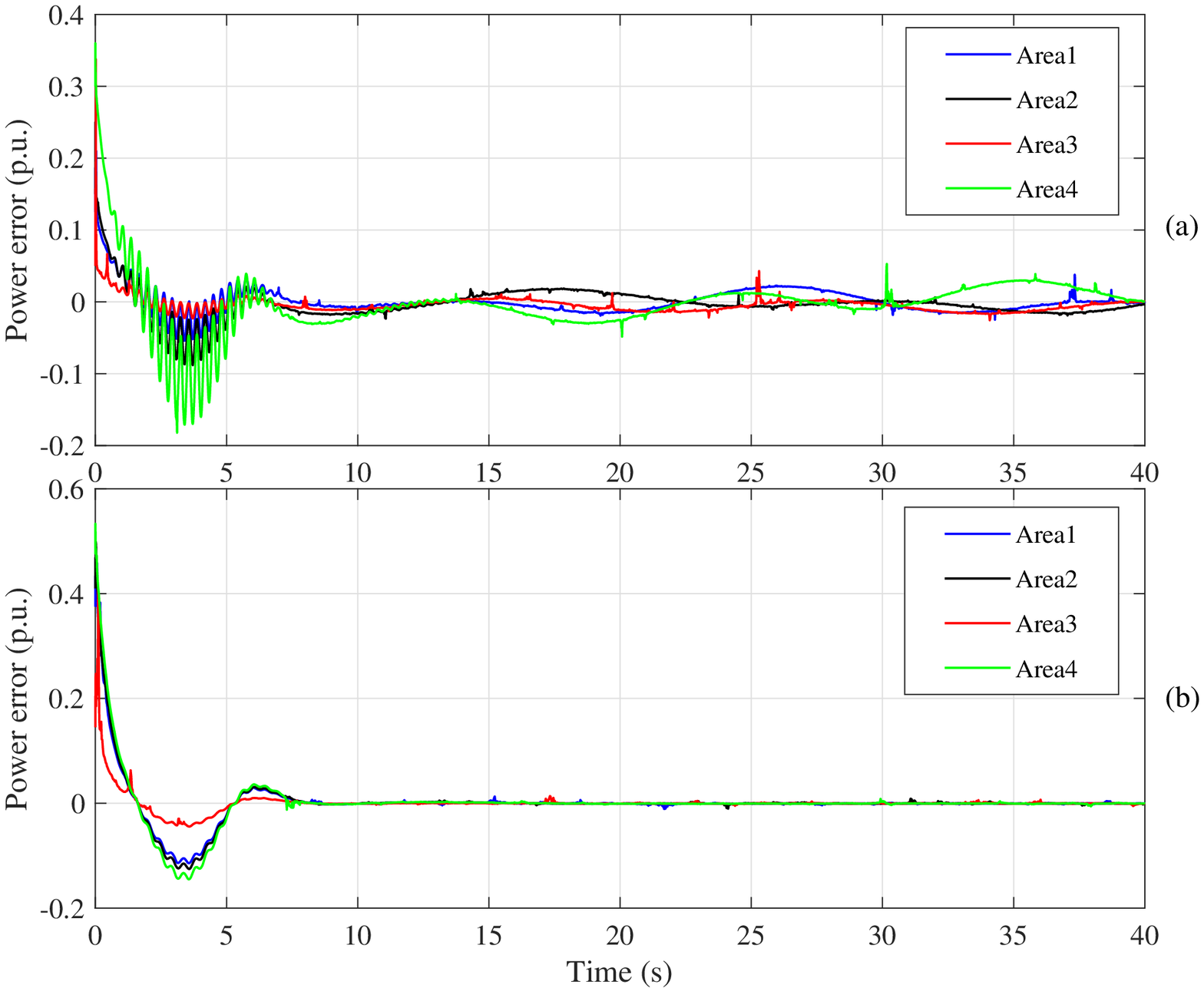}
	\caption{{Scenario~2. Power error: (a) classical output regulation control approach; (b) approximate output regulation control approach.}}
	\label{fnoise5}
\end{figure}

\begin{figure}[t]
	\centering
	\includegraphics[width=\columnwidth]{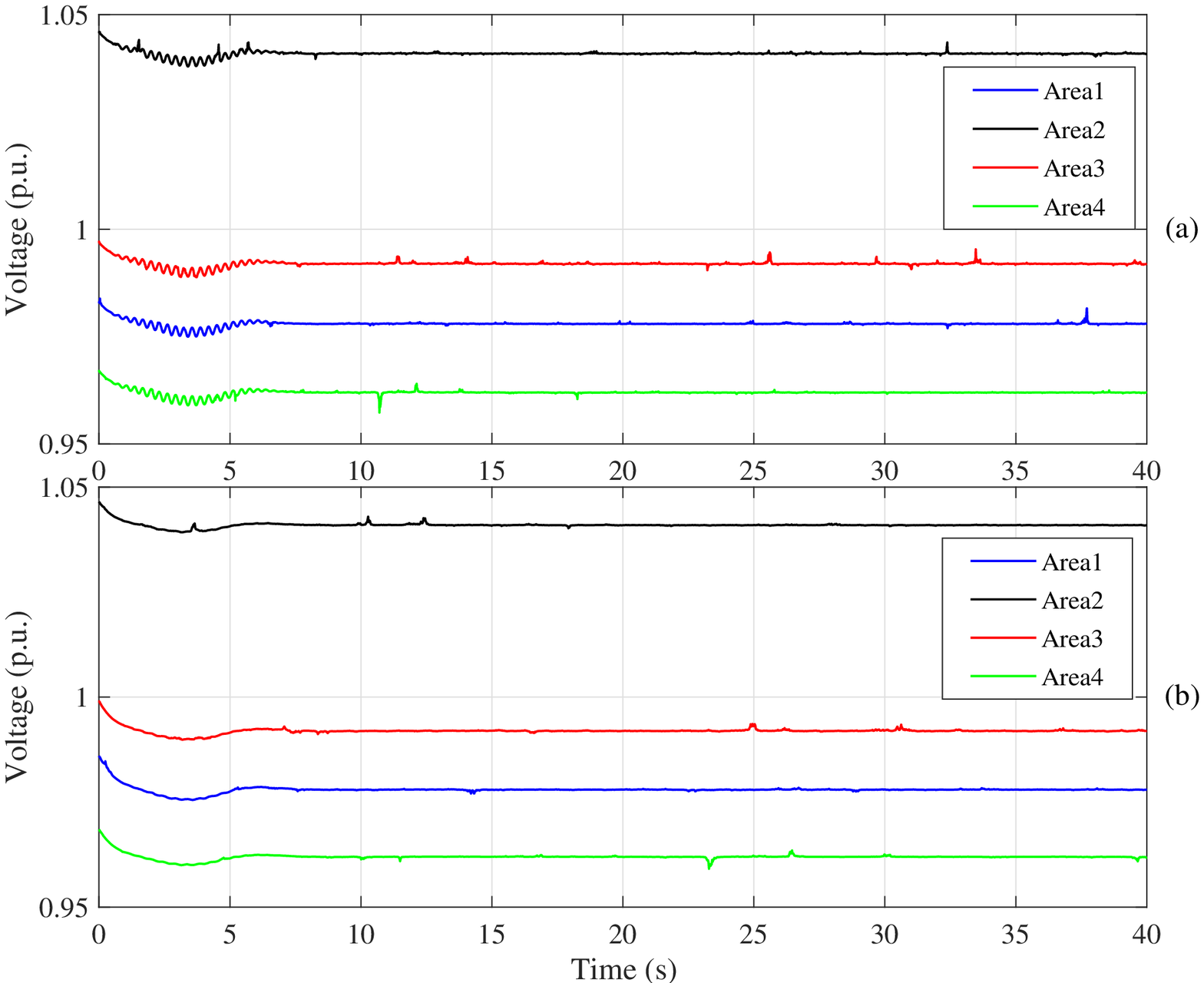}
	\caption{{Scenario~2. Voltage: (a) classical output regulation control approach; (b) approximate output regulation control approach.}}
	\label{fnoise6}
\end{figure}

\begin{figure}[t]
	\centering
		\includegraphics[width=\columnwidth]{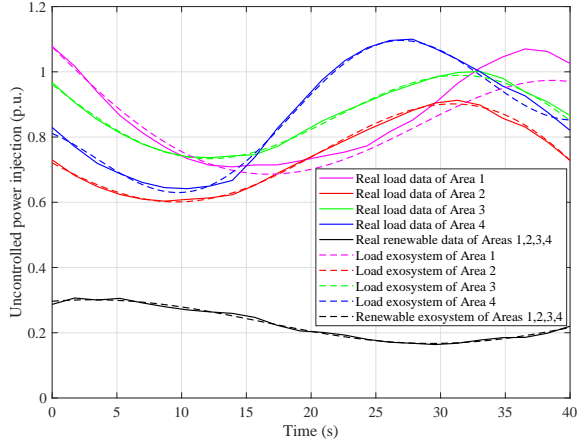}
	\caption{{Scenario~3. Comparison between the absolute value of the real power injections in \cite{data} and the ones produced by the considered exosystems.}}
	\label{f12}
\end{figure}

\begin{figure}[t]
	\centering
	\includegraphics[width=\columnwidth]{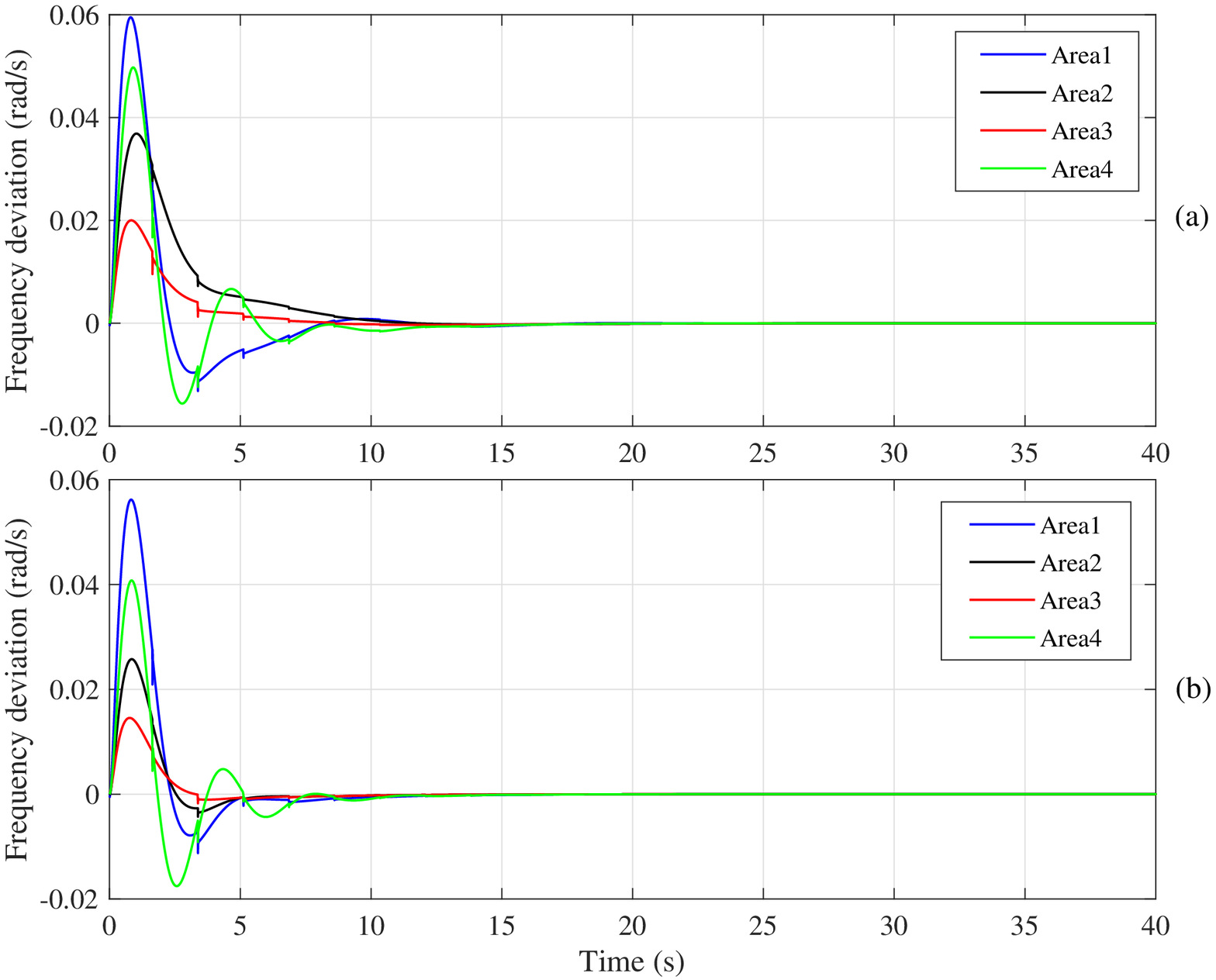}
	\caption{{Scenario~3. Frequency deviation: (a) classical output regulation control approach; (b)  approximate output regulation control approach.}}
	\label{f13}
\end{figure}

\begin{figure}[t]
	\centering
	\includegraphics[width=\columnwidth]{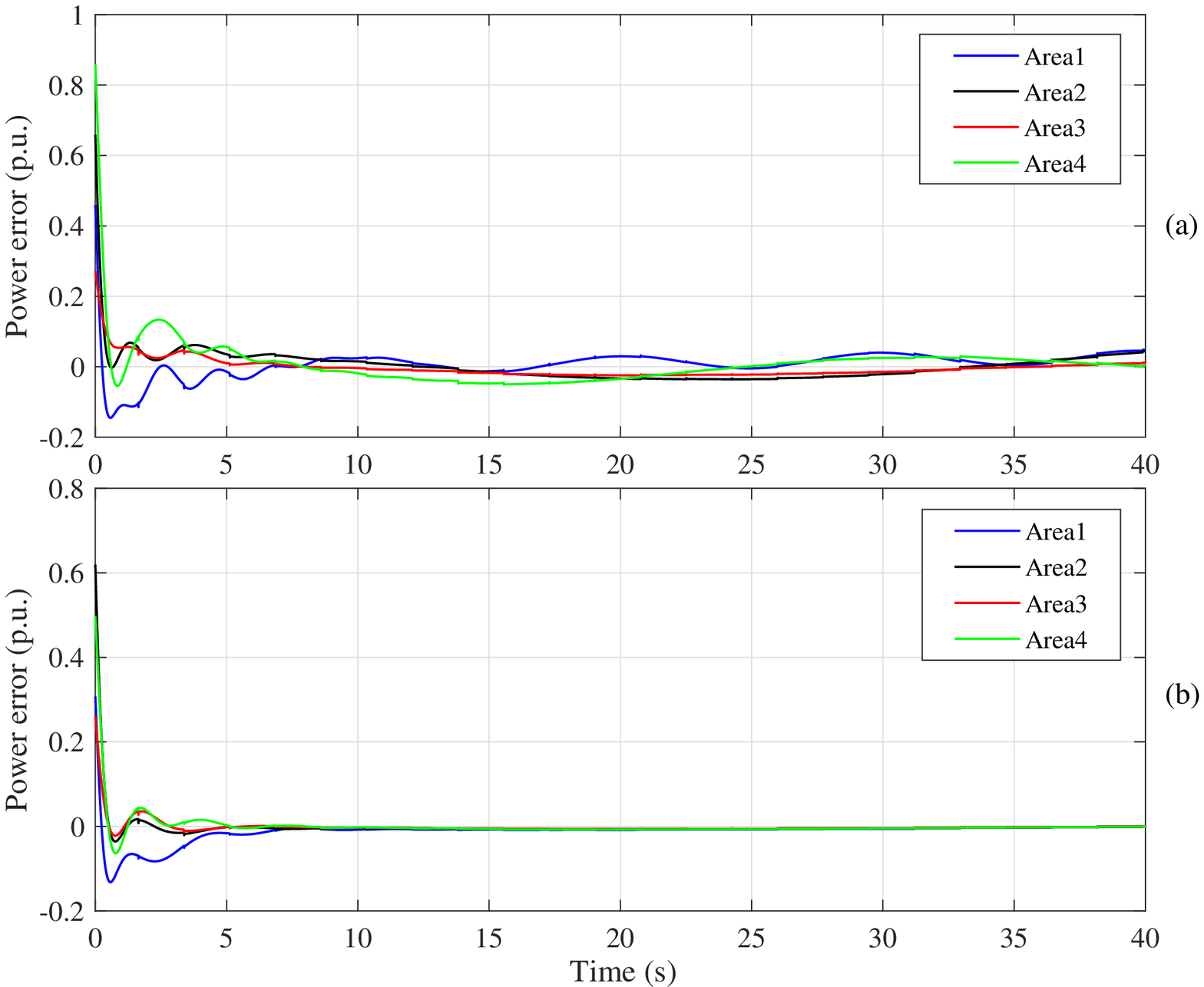}
	\caption{{Scenario~3. Power error: (a) classical output regulation control approach; (b)  approximate output regulation control approach.}}
	\label{f14}
\end{figure}

\begin{figure}[t]
	\centering
	\includegraphics[width=\columnwidth]{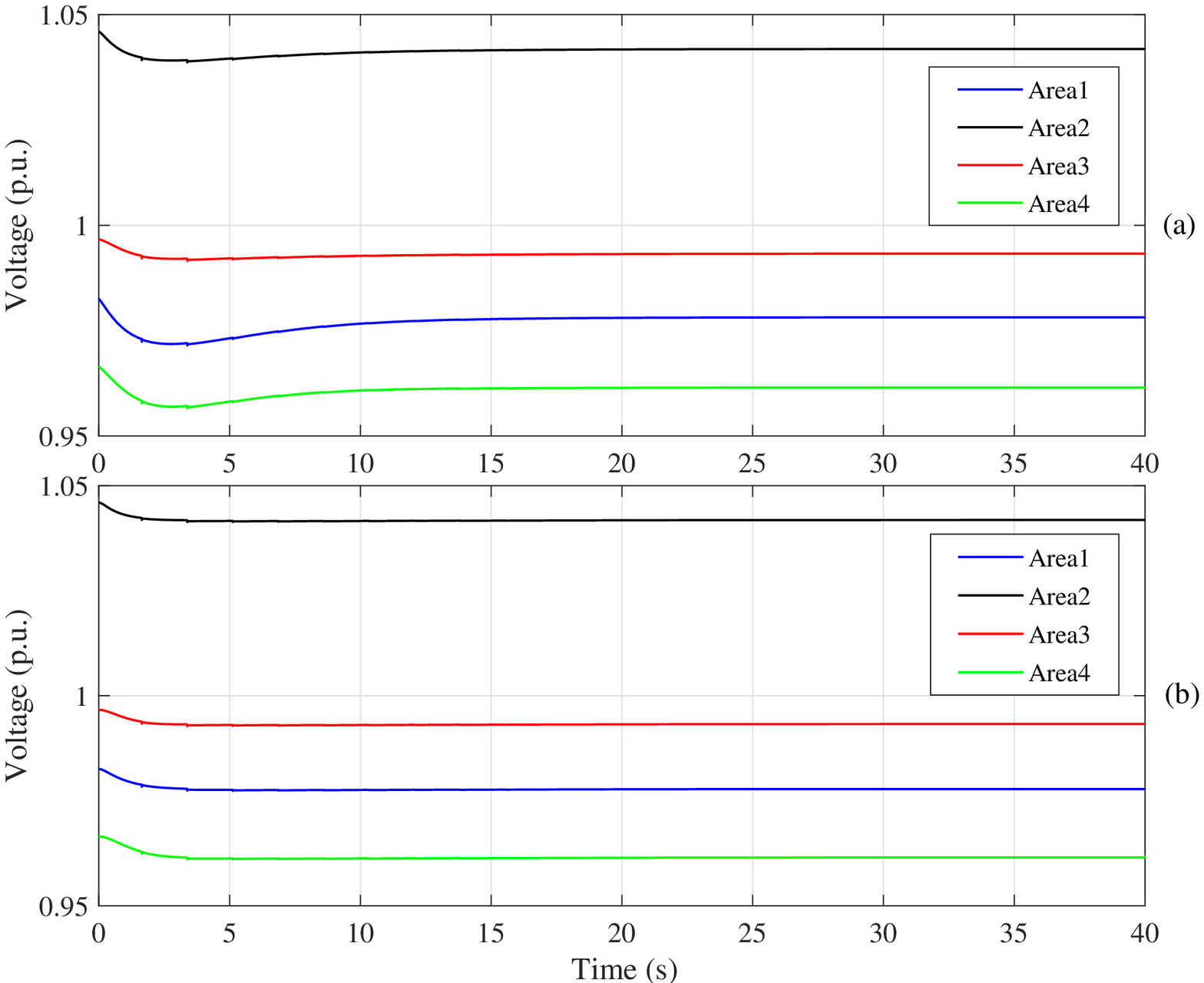}
	\caption{{Scenario~3. Voltage: (a) classical output regulation control approach; (b)  approximate output regulation control approach.}}
	\label{f15}
\end{figure}

%

\section{Simulation Results}
\label{sec:sim}

In this section, an extensive simulation analysis shows excellent performance of the proposed control schemes in three different and critical scenarios. More precisely, we consider a power network partitioned into four control areas (see for instance \cite{ref56.5} on how the IEEE New England 39-bus system can be represented by a network consisting of four areas), which are interconnected as represented in  Fig.~\ref{f1}.
 We assume that each control area includes an equivalent synchronous generator, renewable generations and loads. We provide all the system parameters in Table~\ref{table2}, where the nominal frequency and power base are chosen equal to $120\pi$~rad$/$s and $1000$~MVA, respectively. Also, in Algorithm~\ref{alg:1} we choose  $\bar{\epsilon}=$ \num{1e-7}.

\subsection{Scenario~1: Standard Operating Conditions}
For describing the behaviour of the renewable generation source $i\in\mathcal{V}$, we use the dynamical nonlinear model presented in \cite{ref49.1}, \textit{i.e.},
\begin{equation}\label{eq4.2a}
\begin{split}
\dot{z}^w_{0 i}&=0\\
\dot{z}^w_{1 i}&=z^w_{2 i}\big(\kappa_{0 i} \ln (z^w_{2 i})-\kappa_{0 i} h_i +\frac{s_{0 i}^2}{2}\big) \\
\dot{z}^w_{2 i}&=-z^w_{1 i}\big(\kappa_{0 i} \ln (z^w_{1 i})-\kappa_{0 i} h_i +\frac{s_{0 i}^2}{2}\big) \\
P_{w i}&= z^w_{0i}+z^w_{1i},
\end{split}
\end{equation}
where $z^w_{0 i},, z^w_{1 i}, z^w_{2 i} : \R_{\geq 0}\rightarrow \R$ are the state variables and $\kappa_{0 i}, s_{0 i}, h_i\in\R$ are constant parameters that have been identified in \cite{ref49.1}. 
%
%
Moreover, similarly to \cite{ref16}, we describe the behaviour of the load $i\in\mathcal{V}$ by the following dynamical exosystem:
\begin{equation}\label{eq4.2b}
\begin{split}
\dot{z}^l_{0 i}&=0\\
\dot{z}^l_{1 i}&=\dfrac{2\pi}{15} z^l_{2 i} \\
\dot{z}^l_{2 i}&=-\dfrac{2\pi}{15} z^l_{1 i}\\
P_{l i}&= z^l_{0 i}+z^l_{1 i},
\end{split}
\end{equation}
where $z^l_{0 i}, z^l_{1 i}, z^l_{2 i} : \R_{\geq 0}\rightarrow \R$ are the state variables. 
{Note that, both systems \eqref{eq4.2a} and \eqref{eq4.2b} belong to the class of exosystems we consider in \eqref{eq4.3}, satisfying Assumption~\ref{ass3}.}
Hence, the uncontrolled power injection $i\in\mathcal{V}$ is defined as $P_{di}:=P_{w i}-P_{l i}$.

The system is initially at the steady-state with constant uncontrolled power injections. Then, at the initial time instant $t=$ \SI{0}{\second}, the power generated by the renewable sources and absorbed by the loads is given by \eqref{eq4.2a} and \eqref{eq4.2b}, respectively.
Fig.~\ref{f2} shows the frequency deviations when the classical (see Fig.~\ref{f2}a) and  approximate (see Fig.~\ref{f2}b) output regulation  control approaches are applied, respectively. We can notice that the frequency deviations converge to zero in both cases. 
Additionally, let $P_e:=P_c-P_{c}^\mathrm{opt}$ denote the error between the actual generated power and its corresponding optimal value given by \eqref{optimal}. These errors are shown in Fig.~\ref{f3} when the classical (see Fig.~\ref{f3}a) and approximate (see Fig.~\ref{f3}b) output regulation control approaches are applied, respectively. Although these errors are bounded in both cases, it is evident that the  approximate output regulation control approach achieves Objective~2. {Moreover, as discussed in Remark~\ref{remark6}, we notice that by choosing $\bar{\epsilon}$ sufficiently small ($\bar{\epsilon}=10^{-7}$), the approximate output regulation control approach achieves in practice OLFC. Indeed, by inspecting the time evolution of the norm of the error defined in \eqref{eq6.0}, it appears that its value becomes smaller than \num{1e-3} at $t=$ \SI{16}{\second} and smaller than \num{1e-6} at $t=$ \SI{100}{\second}, implying that Objective 2 is achieved with $\epsilon=$ \num{1e-6} (the plot is not reported due to space limitation). It is then clear that such an error is in practice definitely negligible when affecting the frequency deviation or power error. Also, we have tested the approximate output regulation control approach in presence of constant power injections and by selecting  $\bar{\epsilon}$ equal to zero, achieving OLFC (the plot is not reported due to space limitation).} 
Moreover, we can observe from Fig.~\ref{f4} that also the voltages are stable. 
Finally, we can conclude  that both the control approaches show good performance and guarantee stability. Additionally, the approximate output regulation control approach achieves in practice OLFC. 
{Consider now the case in which the initial conditions of the frequency deviations are not sufficiently close to zero, \textit{i.e.}, $\omega(0) = \col(0.05,0.02,0.09,$ $0.11)$. We can
observe from Figures~\ref{flinear}~(a) and (b)  that by applying the proposed controllers the frequency deviation
at each node converges to zero. On the contrary, we can observe from Figure~\ref{flinear}~(c) that by applying a controller based on the output regulation theory and using for the design the linearization of the considered
nonlinear system around the desired equilibrium, the frequency deviations do not converge to zero.}
{
Finally,  we also compare our controller with the one proposed
in \cite{ref16}, which is designed to deal with linear exosystem models only. We can clearly observe from Figure~\ref{fpersis} that the controller in \cite{ref16} is not capable to achieve frequency regulation due to the nonlinearity of the exosystem \eqref{eq4.2a}.
}

{
\subsection{Scenario~2: Measurement Noise}
We consider Scenario~1 adding white noises to the measurement of the generated power $P_c$. 
We can observe from Fig.~\ref{fnoise4} that both the proposed control approaches regulate the frequency deviations to zero,  preserving the stability of the overall network. Also, we can observe from  Fig.~\ref{fnoise5} that the power errors in the approximate output regulation control approach converge to zero (achieving in practice OLFC) and in the classical output regulation control approach remain stable. Furthermore, we can notice from Fig.~\ref{fnoise6} that the voltages in both control methods are stable as well.}

{
\subsection{Scenario~3: Real Data for Uncontrolled Power Injections}
The system is initially at the steady-state with constant uncontrolled power injections. 
Then, at the time instant $t=$ \SI{0}{\second}, we let the uncontrolled power injections vary according to the real values obtained from the dataset\footnote{{Note that the dataset \cite{data} provides hourly load and renewable generation data. However, given the fast dynamics of our system, it does not make sense to show simulations of 24 hours. Since the real uncontrolled power injection profile looks like a sinusoidal signal, we have then reproduced the same signal (in terms of amplitude) with a higher frequency.}} \cite{data}, (where we use the data of four different areas in the United States, {\textit{i.e.}, CAL, CAR, CENT, and FLA for Areas~1, 2, 3, and 4, respectively,}  {on August 29st, 2020}), while the controller uses the information of suitable exosystems, which we have tuned in order to let them generate uncontrolled power injection trajectories that approximate well but not exactly the real ones (see Fig.~\ref{f12}).
Specifically, we design exosystems that produce the following load:  $P_l=11.88\sin(0.059t+0.89)+11.19\sin(0.063t+3.96)+0.0375$ for Area 1, $P_l=0.814\sin(0.032t+1.27)+0.262\sin(0.121t+3.56)+0.05$ for Area~2, $P_l=0.968\sin(0.016t+1.75)+0.211\sin(0.134t+3.28)+0.0375$ for Area~3, $P_l=1.129\sin(0.011t+0.65)+0.168\sin(0.209t+2.42)+0.0125$ for Area~4, and the following renewable generation:
$P_w=0.19\sin(0.007t+1.22)+0.071\sin(0.117t+1.26)+0.05$ for all the areas. Thus, the exosystem \eqref{eq4.2} can be expressed as}
{ \begin{equation}\label{exosystem_sc4}
\begin{split}
\dot{d}^a_{y i}&=0 \\
\dot{d}^b_{y i}&=\left(\begin{array}{cccc} 0 &-\omega^{\alpha}_{y i} &0  &0\\
\omega^{\alpha}_{y i}   &0  &0 &0\\
0 &0 &0 &-\omega^{\beta}_{y i}\\
0 &0  &\omega^{\beta}_{y i}  &0
\end{array}\right)d^b_{y i} \\
P_{yi}&= \Gamma_{y i}  \col\big(d^a_{y i}, d^b_{y i}\big),
\end{split}
\end{equation}
where ${d}^a_{y i} : \R_{\geq 0}\rightarrow\R$, $d^b_{y i} : \R_{\geq 0}\rightarrow \R^{4}$ are the states of the exosystem, $\omega^{\alpha}_{y i}$, $\omega^{\beta}_{y i}$ are equal to  the frequency of the sinusoidal terms in $P_l$ and $P_w$. Moreover, the elements of the matrix $\Gamma_{y i}$ can be obtained from the amplitude and phase of the sinusoidal terms in $P_l$ and $P_w$, where $y$ denotes $l$ or $w$ in case of load demand or renewable generation, respectively.  Note that, system \eqref{exosystem_sc4} belongs to the class of exosystems we consider in \eqref{eq4.3}.
We can observe from Fig.~\ref{f13}  that, despite the mismatch between the actual uncontrolled power injections and the ones generated by the corresponding exosystems, the frequency deviation at each node converges to zero in both classical and approximate output regulation methods, showing that the controlled system is ISS with respect to such a mismatch. 
 We can also observe from Fig.~\ref{f14} that also in this scenario the approximate output regulation control approach achieves in practice OLFC. Moreover, Fig.~\ref{f15} clearly shows that the voltages are stable as well.}

\section{Conclusion}
In this paper, we have used the output regulation theory for the design and analysis of control schemes for nonlinear power networks affected by \emph{time-varying} renewable energy sources and loads. 
More precisely, based on the classical output regulation theory we have proposed a controller that provably regulates the frequency deviation to zero even in presence of \emph{time-varying} uncontrolled power injections. Then, besides merely controlling the frequency deviation, we have proposed a controller that additionally reduces the generation costs. {Future research includes the modelling of the uncontrolled power injections as
	stochastic differential equations and the use of the Ito calculus framework to tackle
	the problem of OLFC in power networks.}

\begin{appendix}
In this Appendix, we present the proofs of Lemma~\ref{lemma 1}, Theorem~\ref{th2}, and Proposition~\ref{proposition1}. 
\subsection{Proof of Lemma~\ref{lemma 1}}\label{Appendix A}
\begin{proof}
We use the definition of relative degree given in \cite[Definition~2.47]{ref49.2}.
Then, we have
{\begin{equation}\label{eq8}
\begin{split}
L_{g_a} h=&~\big(\boldsymbol{0}_{n\times m}~ \mathds{I}_n~ \boldsymbol{0}_{n\times n}~ \boldsymbol{0}_{n\times n}~  \boldsymbol{0}_{n\times n}~  \boldsymbol{0}_{n\times n(n_d+1)}\big)g_a\\
=&~\boldsymbol{0}_{n\times n} \\
L_{f_a} h=&~\big(\boldsymbol{0}_{n\times m}~ \mathds{I}_n~ \boldsymbol{0}_{n\times n}~ \boldsymbol{0}_{n\times n}~  \boldsymbol{0}_{n\times n}~  \boldsymbol{0}_{n\times n(n_d+1)}\big)f_a\\
=&~\tau_p ^{-1}\Big(-\psi \omega +P_c+\Gamma d -\mathcal{A}\Upsilon(V)\sin(\theta)\Big) \\
L_{g_a}L_{f_a} h
=&~\tau_p^{-1}\tau_c^{-1},
\end{split}
\end{equation}}
 implying that the relative degree for each control  area is equal to 2. 
\end{proof}
\subsection{Proof of Theorem~\ref{th2}}\label{Appendix B}
\begin{proof}
	By virtue of \cite[Theorem~3.26]{ref49.2}, we first compute the following matrix 
	\begin{equation}
	\begin{split}
	G_e(x,d)=\left(\begin{array}{c}
	h(x,d)\\
	L_{f_a}h(x,d) \\
	\end{array}\right),
	\end{split}
	\end{equation}
Then, we notice that the solution to  $G_e(x,d)=\boldsymbol{0}_{2n}$ for system (\ref{eq6}) can be obtained as follows:
	\begin{equation}\label{eq33}
	\begin{split}
	\boldsymbol{0}_n&=\omega^\ast\\
	\boldsymbol{0}_n&=P_c+\Gamma d-\mathcal{A}\Upsilon(V)\sin(\theta^\ast).
	\end{split}
	\end{equation}
	where $\omega^\ast=\boldsymbol{0}_n$ and $\theta^\ast$ denote the solutions to \eqref{eq33}. 
	Thus, there exist the partition $x^{a}:=\col(\theta,\omega), x^{b}:=\col(V,P_c,\delta)$ and a sufficiently smooth function
	\begin{equation}
	\begin{split}
	\zeta(x^{b},d):=\left(\begin{array}{c}
	\theta^\ast\\
	\boldsymbol{0}_n
	\end{array}\right),
	\end{split}
	\end{equation}
	such that $G_e(x,d)\vert_{x^a=\zeta(x^b,d)}=\boldsymbol{0}_{2n}$.
Recalling that for each $i=1,\dots,n$, the $i$-th output $h_i$ of system \eqref{eq6} has relative degree equal to 2 (see Lemma \ref{lemma 1}), we compute the equivalent control input $u_e(x,d)$ by posing the second-time derivate of the output mapping \eqref{eq6c} equal to zero, \textit{i.e.}, 
	\begin{equation}\label{eq34}
	L_{f_{a}}^2h(x,d)+L_{g_{a}}L_{f_{a}}h(x,d)u_e(x,d)=\boldsymbol{0}_n,
	\end{equation}
obtaining the following expression:
	\begin{equation}\label{eq34.0}
	\begin{split}
	u_e(x,d)=&-(\tau_p^{-1}\tau_c^{-1})^{-1}D_a(x,d)\\
	=&~\tau_c\mathcal{A}\Upsilon(V)[\cos(\theta)]\mathcal{A}^\top\omega-\tau_c\tau_p^{-1}\psi ^2\omega\\
	&+\tau_c \tau_p^{-1} \psi \Big(P_c+\Gamma d-\mathcal{A}\Upsilon(V)\sin(\theta)\Big)\\
	&-\tau_c \tau_v^{-1}\mathcal{A}[\sin(\theta)]\Upsilon(V)\vert \mathcal{A}\vert [V]^{-1}\Big(\chi_dE(\theta)V\\
	&-\bar{E}_f\Big)+P_c+\xi^{-1}\omega-\tau_c\Gamma S(d).
	\end{split}
	\end{equation}
{Now, let $u_e^\ast(x,d) := u_e(x,d)\vert _{x^a= \zeta(x^{b},d)}$. By replacing $\omega$ and $\theta$ in \eqref{eq34.0} with $\omega^\ast$ and $\theta^\ast$ given  in \eqref{eq33}, we obtain }
	\begin{equation}\label{ue}
	\begin{split}
	u_e^\ast(x,d)=&~u_e(x,d)\vert _{x^a=\zeta(x^{b},d)}\\
	=&-\tau_c \tau_v^{-1}\mathcal{A}[\sin(\theta)]\Upsilon(V)\vert \mathcal{A}\vert [V]^{-1}\Big(\chi_dE(\theta^\ast)V\\
	&-\bar{E}_f\Big)+P_c-\tau_c\Gamma S(d).
	\end{split}
	\end{equation}
	Then, the zero dynamics of (\ref{eq6})  are given by
	\begin{equation}
	\begin{split}
	\tau_v \dot{V}=&-\chi_d E(\theta^\ast)V+\bar{E}_f\\
	\tau_c \dot{P}_c=&-P_c+u_e^\ast(x,d)\\
	\tau_{\delta}\dot{\delta}=&-\delta +P_c-\xi^{-1}QL^{\mathrm{com}}(Q\delta +R)\\
	\dot{d}=&~S(d),
	\end{split}
	\end{equation}
	which can be rewritten as
	\begin{equation}\label{zero1}
	\begin{split}
	\dot{x}^b&=\varrho({x}^b(d),d)\\
	\dot{d}&=S(d),
	\end{split}
	\end{equation}
{	where $\varrho({x}^b(d),d)$ is given by \eqref{F}.
Now,   we replace ${x}^b$ in \eqref{zero1} with the solution $\boldsymbol{x}^b(d)$ to \eqref{eq40} and define $\boldsymbol{\theta}(d):=\theta^\ast(\boldsymbol{x}^b(d))$.}
	 Thus, according to \cite[Theorem~3.26]{ref49.2}, the solution to the regulator equation \eqref{eq25} can be expressed as follows:
	\begin{equation}\label{solution1}
	\begin{split}
	\boldsymbol{x}(d)=&\left(\begin{array}{c} 
	\boldsymbol{\theta}(d)\\
	\boldsymbol{0}_n\\
	\boldsymbol{x}^b(d)
	\end{array}\right) \\
	\boldsymbol{u}(d)=&~u_e^\ast(\boldsymbol{x}(d),d),
	\end{split}
	\end{equation}
where ${u}^\ast_e\big(\boldsymbol{x}(d),d\big)$ is given by \eqref{eq41}.
	Hence, following Theorem~\ref{th1}, Problem~\ref{problem1} is solvable.
	Then, the state feedback controller \eqref{eq40.2} guarantees that the trajectories of the closed-loop system \eqref{eq3}, \eqref{eq4.6}, \eqref{eq4.3}, \eqref{eq40.2} starting sufficiently close to $(\bar{\theta}, \boldsymbol{0}_n, \bar{V}, \bar{P}_c, \bar{\delta}, \bar{d})$ are bounded and  converge to the set where the frequency deviation is equal to zero, achieving Objective~1.
\end{proof}

{In the proof of Theorem~\ref{th2}, we have provisionally assumed that the solution to \eqref{eq40}  exist. In the following proposition, the condition for the solvability of  the regulator equation \eqref{eq25}, implying the solvability of \eqref{eq40}, is investigated}.
\begin{proposition}\textbf{(Existence of solution to the regulator equation \eqref{eq25}).}\label{proposition2}
	The solution to regulator equation (\ref{eq25}) exists if $A_{11}$ has no zero real part eigenvalue and for all $\rho\in\mathbb{R}$ 
	\begin{equation}\label{eq43} 
	\begin{split} 
	\det\big(A_{22}-j\rho \mathds{I}_n-A_{21}(A_{11}-j\rho \mathds{I}_n)^{-1}A_{12}\big) \neq 0,
	\end{split} 
	\end{equation} 
	where
	\begin{equation}\label{eqwhere} 
	\begin{split} 
	A_{11}&=-\frac{\partial \big(\tau_v^{-1}\chi_d E(\theta^\ast(x^b(d))){V}\big)}{\partial V}\Big\vert_{(x,d)=(\bar{x},\bar{d})} \\
	A_{12}&=-\frac{\partial \big(\tau_v^{-1}\chi_d E(\theta^\ast(x^b(d))){V}\big)}{\partial P_c}\Big\vert_{(x,d)=(\bar{x},\bar{d})} \\ 
	A_{21}&=\tau_c^{-1}\frac{\partial u_e^\ast(x,d)}{\partial V}\Big\vert_{(x,d)=(\bar{x},\bar{d})}\\
	A_{22}&=-\tau_c^{-1}+\tau_c^{-1}\dfrac{\partial u_e^\ast(x,d)}{\partial P_c}\Big\vert_{(x,d)=(\bar{x},\bar{d})} \\
	A_{33}&=-\tau_{\delta}^{-1}\big(\mathds{I}_n+\xi^{-1}QL^{\mathrm{com}}Q\big),
	\end{split}
	\end{equation}
$\theta^\ast(x^b(d))$ is the solution to \eqref{eq33}, $u_e^\ast(x,d)$ is given by \eqref{ue}, and $(\bar{x},\bar{d})$ satisfies \eqref{eq4}.
\end{proposition}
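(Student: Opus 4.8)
The claim reduces the hyperbolicity condition of Remark~\ref{rm:solutionRE} — namely that $A$ in \eqref{hyper1} has no eigenvalue on the imaginary axis — to a block-triangular form. The first step is to exhibit the block structure of $A=\partial\varrho(x^b,d)/\partial x^b$ evaluated at $(\bar x,\bar d)$. Since $x^b=\col(V,P_c,\delta)$ and $\varrho$ is given componentwise in \eqref{F}, I would differentiate each of the three rows with respect to $V$, $P_c$, and $\delta$. The key observation is that the first row, $\tau_v^{-1}(-\chi_d E(\boldsymbol{x}^b(d))\boldsymbol{V}(d)+\bar E_f)$, and the second row, $\tau_c^{-1}(-\boldsymbol{P}_c(d)+u_e^\ast)$, do not depend on $\delta$ (inspect \eqref{F} and \eqref{ue}: $u_e^\ast$ is a function of $\theta^\ast(x^b)$, $V$, $P_c$ only, not $\delta$), so the $(1,3)$ and $(2,3)$ blocks of $A$ vanish. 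Hence $A$ is block lower-triangular with respect to the partition $\col(V,P_c)$ versus $\delta$, i.e.
\begin{equation}\nonumber
A=\begin{pmatrix} A_{11} & A_{12} & \boldsymbol{0}\\ A_{21} & A_{22} & \boldsymbol{0}\\ A_{31} & A_{32} & A_{33}\end{pmatrix},
\end{equation}
with the blocks as defined in \eqref{eqwhere}. Consequently $\operatorname{spec}(A)=\operatorname{spec}\!\begin{psmallmatrix}A_{11}&A_{12}\\A_{21}&A_{22}\end{psmallmatrix}\cup\operatorname{spec}(A_{33})$.

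**Second step: the $A_{33}$ block.** I would show $A_{33}=-\tau_\delta^{-1}(\mathds{I}_n+\xi^{-1}QL^{\mathrm{com}}Q)$ has no imaginary-axis eigenvalue, in fact all eigenvalues with negative real part. Here $\tau_\delta,\xi>0$ are diagonal, $Q>0$ (strict convexity of the cost \eqref{eq4.61}), and $L^{\mathrm{com}}\geq 0$ is a graph Laplacian; thus $\xi^{-1}QL^{\mathrm{com}}Q$ is a product of a positive definite matrix $\xi^{-1/2}Q$ (up to congruence) with a positive semidefinite one, so it is similar to a positive semidefinite matrix and has real nonnegative spectrum. Therefore $\mathds{I}_n+\xi^{-1}QL^{\mathrm{com}}Q$ has spectrum in $[1,\infty)$ and, after left-multiplication by the positive diagonal $-\tau_\delta^{-1}$, a Lyapunov / congruence argument (or directly: $-\tau_\delta^{-1}(\mathds I_n+\xi^{-1}QL^{\mathrm{com}}Q)$ is similar via $\tau_\delta^{1/2}(\cdot)\tau_\delta^{-1/2}$ to $-\tau_\delta^{-1/2}(\mathds I_n+\xi^{-1}QL^{\mathrm{com}}Q)\tau_\delta^{-1/2}\Gamma$-type symmetric negative definite form) shows all its eigenvalues lie in the open left half-plane. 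In particular $0\notin\operatorname{spec}(A_{33})$ and no eigenvalue is purely imaginary.

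**Third step: the $2n\times 2n$ leading block.** It remains to translate ``$\begin{psmallmatrix}A_{11}&A_{12}\\A_{21}&A_{22}\end{psmallmatrix}$ has no imaginary-axis eigenvalue'' into the determinant condition \eqref{eq43}. A number $j\rho$, $\rho\in\mathbb R$, is an eigenvalue iff $\det\!\begin{psmallmatrix}A_{11}-j\rho\mathds I_n & A_{12}\\ A_{21}& A_{22}-j\rho\mathds I_n\end{psmallmatrix}=0$. Using the hypothesis that $A_{11}$ has no imaginary-axis eigenvalue, the block $A_{11}-j\rho\mathds I_n$ is invertible for every real $\rho$, so the Schur complement formula gives
\begin{equation}\nonumber
\det\!\begin{pmatrix}A_{11}-j\rho\mathds I_n & A_{12}\\ A_{21}& A_{22}-j\rho\mathds I_n\end{pmatrix}=\det(A_{11}-j\rho\mathds I_n)\,\det\!\big(A_{22}-j\rho\mathds I_n-A_{21}(A_{11}-j\rho\mathds I_n)^{-1}A_{12}\big).
\end{equation}
The first factor is nonzero by assumption, so the left side is nonzero for all $\rho$ precisely when \eqref{eq43} holds. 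Combining the three steps, $A$ has no purely imaginary eigenvalue, hence by \cite[Corollary~3.27]{ref49.2} (cf.\ Remark~\ref{rm:solutionRE}) the regulator equation \eqref{eq25} is solvable, which in turn gives the solvability of \eqref{eq40}.

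**Anticipated obstacle.** The routine parts are the Schur-complement manipulation (step 3) and the spectral bound on $A_{33}$ (step 2). The only genuinely delicate point is step 1: one must verify carefully that differentiating $\varrho$ and the equivalent control $u_e^\ast$ of \eqref{ue} through the implicit dependence $\theta^\ast=\theta^\ast(x^b(d))$ — where $\theta^\ast$ solves \eqref{eq33}/\eqref{hhh} — still produces no $\delta$-dependence and yields exactly the block expressions $A_{11},A_{12},A_{21},A_{22}$ quoted in \eqref{eqwhere}; this needs the implicit function theorem applied to \eqref{eq33} together with Assumption~\ref{ass2}, which guarantees $\partial/\partial\theta$ of the second equation in \eqref{eq33} is nonsingular near $\bar\theta$ so that $\theta^\ast$ is well-defined and smooth. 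I would present that verification in detail and treat the rest as short computations.
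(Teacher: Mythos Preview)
Your proposal is correct and follows essentially the same route as the paper: exhibit the block lower-triangular structure of $A=\partial\varrho/\partial x^b\big|_{(\bar x,\bar d)}$ (the paper even records $A_{31}=\boldsymbol 0$, $A_{32}=\tau_\delta^{-1}$), argue that $A_{33}$ has only eigenvalues with negative real part from the sign-definiteness of $\tau_\delta,\xi,Q,L^{\mathrm{com}}$, and then apply the Schur complement twice to reduce the hyperbolicity of the leading $2n\times 2n$ block to the hypotheses on $A_{11}$ and \eqref{eq43}, concluding via \cite[Corollary~3.27]{ref49.2}. Your additional care with the implicit-function-theorem justification for $\theta^\ast(x^b)$ and the more detailed spectral argument for $A_{33}$ only sharpen what the paper sketches.
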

\begin{proof}
We have discussed in Remark~\ref{rm:solutionRE} that the solution to the regulator equation (\ref{eq25}) exists if all the eigenvalues of the matrix  
\begin{equation}\label{eq44}
\begin{split}
A&:=\frac{\partial\varrho(x^b,d)}{\partial x^b}\Big\vert_{(x,d)=(\bar{x},\bar{d})}\\
&=\left(\begin{array}{ccc} A_{11} &A_{12}  &\boldsymbol{0} \\
A_{21} &A_{22} &\boldsymbol{0} \\
\boldsymbol{0} &\tau_{\delta}^{-1}   & A_{33}
\end{array}\right),
\end{split}
\end{equation}
have nonzero real part, where $A_{ij}, i,j=1,2,3$ are defined in~\eqref{eqwhere}. Now, let $\lambda$ denotes the eigenvalues of matrix $A$. Then, by using the Schur complement of the block $A_{33}-\lambda \mathds I_{n}$ of the matrix $A-\lambda \mathds{I}_{3n}$, the eigenvalues of $A$ satisfy
\begin{equation}\label{eq47..}
\begin{split}
\det\big( A-\lambda \mathds{I}_{3n}\big)=&
~\det\big(A_{33}-\lambda \mathds{I}_n\big) \cdot\\
&~\det \underbrace{\left(\begin{array}{cc} A_{11}-\lambda \mathds{I}_n &A_{12} \\ A_{21} &A_{22}-\lambda\mathds{I}_n
\end{array}\right)}_{:= \, \tilde A}\\
=&~\det\big(A_{33}-\lambda \mathds{I}_n\big)\cdot\det\big(A_{11}-\lambda \mathds{I}_n\big)\cdot\\
&\det\Big(A_{22}-\lambda \mathds{I}_n\\
&-A_{21}(A_{11}-\lambda \mathds{I}_n)^{-1}A_{12}\Big)\\
=&~0,
\end{split}
\end{equation}
where the second equality is obtained by using again the Schur complement of the block $A_{11}-\lambda\mathds{I}_n$ of the matrix $\tilde A$.
We notice that the matrices $\xi$ and $Q$ are positive definite matrices and $L^{\mathrm{com}}$ is a positive semi-definite matrix. Therefore, $A_{33}$ is a negative definite matrix. Also, by virtue of the assumptions in the Proposition statement, $A_{11}$ has no zero real part eigenvalues and  (\ref{eq43}) holds for all $\rho\in\mathbb{R}$. Then all the eigenvalues of $A$ have nonzero real parts. Consequently, according to \cite[Corollary~3.27]{ref49.2} the solution to the regulator equation (\ref{eq25}) exists.
\end{proof}

\end{appendix}

\ifCLASSOPTIONcaptionsoff
  \newpage
\fi



%

\balance

\end{document}